\colorlet{shadecolor}{orange!15}
\theoremstyle{plain}
\newtheorem{theorem}{Theorem}
\newtheorem{lemma}[theorem]{Lemma}
\newtheorem{corollary}[theorem]{Corollary}
\newcommand*{\cB}{\mathcal{B}}
\newcommand*{\cF}{\mathcal{F}}
\newcommand*{\cH}{\mathcal{H}}
\newcommand*{\cK}{\mathcal{K}}
\newcommand*{\cX}{\mathcal{X}}
\newcommand*{\cZ}{\mathcal{Z}}
\newcommand*{\id}{\mathrm{id}}
\newcommand*{\ket}[1]{{| #1 \rangle}}
\newcommand*{\bra}[1]{{\langle #1 |}}
\newcommand*{\spr}[2]{\langle #1 | #2 \rangle}
\colorlet{shadecolor}{orange!15}
   \newmdenv[ %
 linewidth = 0pt, %
  leftmargin = -5, %
  rightmargin = -5, %
  backgroundcolor = orange!15, %
  innertopmargin = 4, %
   innerbottommargin = 4, %
  splittopskip = \topskip, %
  footnoteinside = true, %
  ]{emphbox}
\definecolor{orange}{rgb}{1,0.5,0}
\definecolor{darkgreen}{rgb}{0,0.4,0}
\newcommand*{\hor}{\mathrm{h}}
\newcommand*{\ver}{\mathrm{v}}
\newcommand*{\diag}{\mathrm{d}}
\newcommand*{\Exp}{\mathsf{Exp}}
\newcommand*{\ExpMeas}{\mathbf{P\&M}}
\newcommand*{\ExpMeasRep}{\mathbf{P\&M}^*}
\newcommand*{\BornBayes}{\textsf{\textit{BornB}}}
\newcommand*{\BornFreq}{\textsf{\textit{BornF}}}
\newcommand*{\BornDet}{\textsf{\textit{Overlap}}}
\newcommand*{\Robust}{\textsf{\textit{Robust}}}
\newcommand*{\Repeat}{\textsf{\textit{Repeat}}}
\newcommand*{\Symmetry}{\textsf{\textit{Symmetry}}}
\newcommand*{\texts}[1]{{\begin{minipage}{7.8cm} \vspace{0em}   \raggedright \begin{spacing}{0.88}  \hangindent=0.7em {\footnotesize \hangafter=1  ``#1''  } \end{spacing} \vspace{0.25em} \end{minipage}} }
\newcommand*{\textss}[1]{{\begin{minipage}{6cm} \vspace{0.4em}   \raggedright \begin{spacing}{0.88}  \hangindent=0.7em {\footnotesize \hangafter=1  ``#1''  } \end{spacing} \vspace{0.4em} \end{minipage}} }
\newcommand*{\zspec}{{\hat{z}}}
\begin{document}

\title{A non-probabilistic substitute for the Born rule}

\author{Daniela Frauchiger}
\author{Renato Renner}\email{renner@ethz.ch}
\affiliation{Institute for Theoretical Physics, ETH Zurich, Switzerland}

\begin{abstract}
The Born rule assigns a probability to any possible outcome of a quantum measurement, but leaves open the question how these probabilities are to be interpreted and, in particular, how they relate to the outcome observed in an actual experiment. We propose to avoid this question by replacing the Born rule with two non-probabilistic postulates: (i)~the projector associated to the observed outcome must have a positive overlap with the state of the measured system; (ii)~statements about observed outcomes are robust, that is, remain valid under small perturbations of the state. We  show that the two postulates suffice to retrieve the interpretations of the Born rule that are commonly used for analysing experimental data. 
\end{abstract}

\maketitle

\section{Introduction}

In standard quantum mechanics the Born rule~\cite{Born26} has the status of a postulate. It builds upon two other postulates of the theory, namely that the states of a quantum system  can be represented by normalised vectors $\psi$ in a Hilbert space $\cH$ (which may include auxiliary systems) and that measurements correspond to families $\{\pi_z\}_{z \in \cZ}$ of projectors that sum up to the identity on $\cH$.\footnote{Alternatively, one may use density operators to specify the system's states. However, as the Born rule plays a significant role for justifying the density operator formalism, we omit it here in order to avoid any possible circularity in our arguments.}  The aim of the Born rule is to relate these purely mathematical concepts to experimental observations.\footnote{In other physical theories, such as classical mechanics, the relation between the mathematical formalism and experimental observations is more intuitive and not usually specified explicitly.} 
It asserts that if a system in state $\psi$ is subject to a measurement $\{\pi_z\}_{z \in \cZ}$ then outcome $z \in \cZ$ is observed with probability
\begin{equation}\label{eq_tradBorn}
  P_{\psi}(z)= \| \pi_z \psi\|^2  \ . \tag{$\star$}
\end{equation}
This expression is sometimes termed the  ``probabilistic axiom'' of quantum theory. However, unless the notion of  probabilities is given a physical interpretation, \eqref{eq_tradBorn} remains an ambiguous |  if not an empty | statement~\cite{Appleby05}.  In other words, it is necessary to specify  in what sense the real-valued number $P_{\psi}(z)$ assigned to~$z$ relates to the actual observation of $z$ (cf.\ Fig.~\ref{fig_halfway}). 

There exists a plethora of literature on how to give meaning to the mathematical notion of probabilities, and hence how to understand~\eqref{eq_tradBorn} and use it to analyse experimental data.  Here we consider the two most common interpretations of probabilities, \emph{frequentist probabilities} and \emph{Bayesian probabilities}. Each of them leads to a specific reading of \eqref{eq_tradBorn}, which, for later reference, we term \BornFreq{} and \BornBayes{}, respectively. We will make these precise in Sections~\ref{sec_Frequentist} and~\ref{sec_Bayes}, based on a framework to be introduced in Section~\ref{sec_framework}.

\begin{itemize}
\item \BornFreq{} relates \eqref{eq_tradBorn} to an experiment where the measurement  is repeated many times under identical initial conditions. $P_{\psi}(z)$ is interpreted as the \emph{frequency} of occurrences of $z$ in the sequence of outcomes, in the limit of infinitely many repetitions.
\item \BornBayes{} uses \eqref{eq_tradBorn}  to characterise a \emph{subjective belief} about  the outcome of a future measurement.  $P_{\psi}(z)$  is interpreted as the maximum price a rational agent would pay to enter a bet with payoff~$1$ if outcome~$z$ is observed, and no payoff otherwise.   
\end{itemize}

\begin{figure}[t]
\centering
\includegraphics[trim= 4.4cm 9.1cm 4.4cm 11.0cm, clip=true, width=0.48\textwidth]{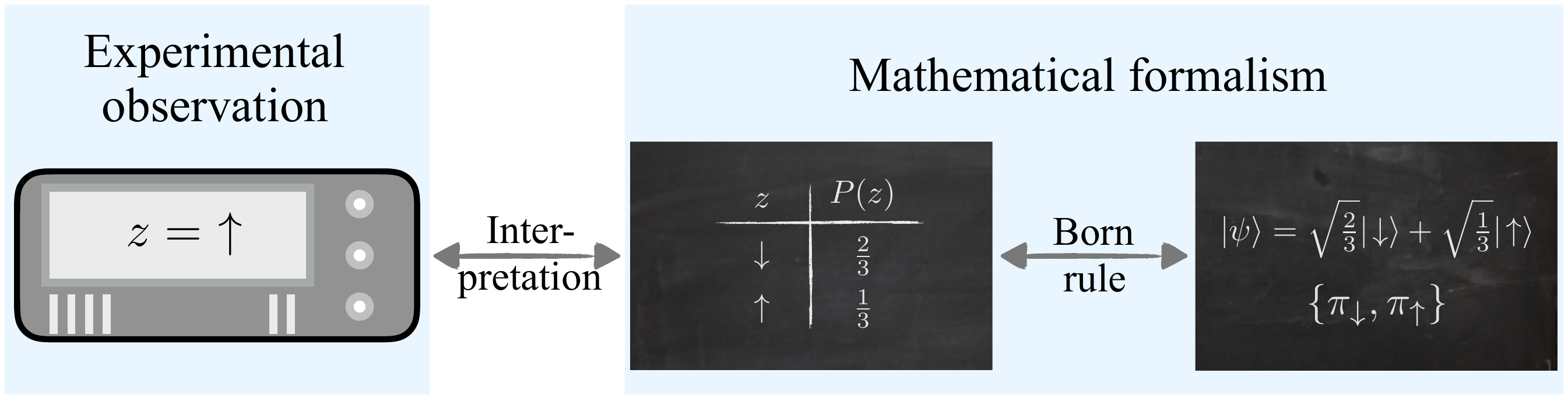}
\caption{\label{fig_halfway}  \emph{The need of an interpretation.} The Born rule~\eqref{eq_tradBorn} relates amplitudes to probabilities, both of them being purely mathematical notions. To link them to actual observations, the probabilities need to be equipped with an interpretation.}
\end{figure}

If one wishes to regard the Born rule as a basic postulate of physics then neither of these readings is entirely satisfactory. \BornFreq{} is constrained to hypothetical situations where a measurement can be repeated arbitrarily many times under identical conditions. \BornBayes{} does not suffer from this restriction, but talks about an agent's belief rather than about physical observations. In addition, it is also questionable whether a particular quantitative expression such as~\eqref{eq_tradBorn} should really be given the status of a postulate | instead of being derived from more fundamental principles. The principle of relativity, for instance, on which the theory of relativity is based, does not refer to any particular numerical values. 

This raises the question whether the Born rule could be substituted by something else. Here we propose a pair of alternative postulates.  Crucially, neither of them relies on the notion of probabilities  (see Fig.~\ref{fig_full})!
\begin{itemize}
  \item \BornDet{}: If $\psi$ has no overlap with $\pi_z$ then the corresponding  outcome $z$ is not observed.
  \item \Robust{}: Statements that relate $\psi$ to observed outcomes are robust under small perturbations of~$\psi$.
\end{itemize}
While \BornDet{}  may be interpreted as a special case of~\eqref{eq_tradBorn}, we stress that it is of a different kind: rather than assigning probabilities to quantum states, it relates quantum states directly to observations. This, together with \Robust{}, also makes it falsifiable by experiments. 

\begin{figure}[t]
\centering
\includegraphics[trim=4.4cm 9.1cm 4.4cm 12.4cm, clip=true, width=0.48\textwidth]{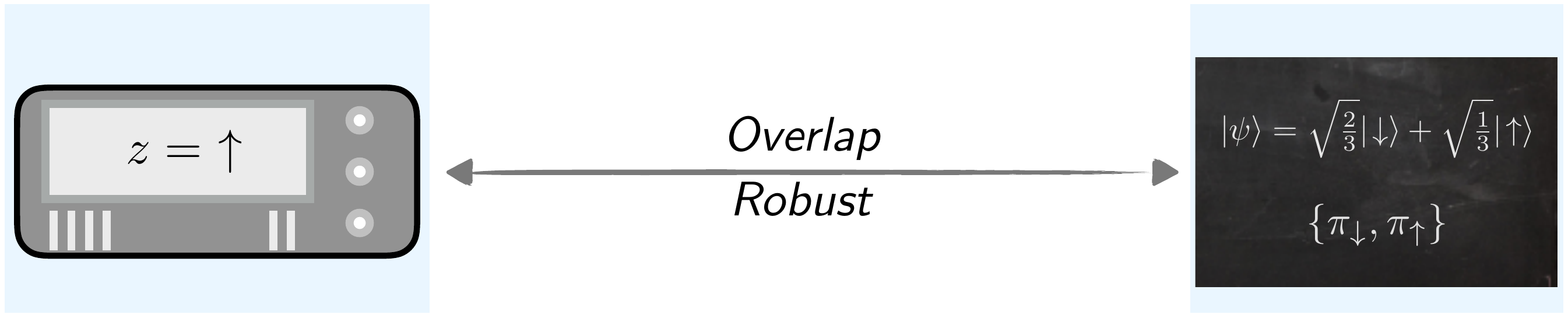}

\caption{\label{fig_full} \emph{Non-probabilistic postulates.} The postulates proposed in this work, \BornDet{} and \Robust{}, establish a direct connection between quantum states and observed measurement outcomes, without a detour via probabilities.}
\end{figure}

The main technical contribution of this work is to show that, if one takes \BornDet{} and \Robust{} as postulates, the above interpretations of the Born rule can be derived. The frequentist interpretation, in particular, is obtained directly without any additional assumptions, i.e.,
\begin{align} \label{eq_claimBornFreq}
  \text{\BornDet} \wedge   \text{\Robust} \implies   \text{\BornFreq}  \ .
\end{align}
The proof of this claim is provided in Section~\ref{sec_Frequentist}.

To retrieve the  subjective interpretation, i.e., a statement about an agent's personal belief,  we need to make some extra assumptions concerning the agent's reasoning.
\begin{itemize}
  \item  \Repeat{}: The agent's belief about the outcome of a prepare-and-measure experiment does not depend on whether or not they plan to repeat the same experiment later on.
    \item \Symmetry{}: The agent's belief about a sequence of outcomes obtained by repeating the same prepare-and-measure experiment does not depend on how the sequence is ordered.
  \end{itemize}
Combined with the above, these assumptions  suffice to retrieve the Bayesian interpretation of the Born rule, i.e., 
\begin{align} \label{eq_claimBornBayes}
  \left. \begin{matrix}
    \text{\BornDet} \wedge \text{\Robust} \\
    \text{\Repeat} \wedge \text{\Symmetry} 
  \end{matrix}
  \right\}
  \implies \text{\BornBayes} \ .
\end{align}
The derivation of this claim will be based on an idea of de Finetti~\cite{deFin37} and is explained in Section~\ref{sec_Bayes}.

\section{Related work} 

Various approaches to derive the Born rule from more fundamental principles have been proposed in the literature. One of them is to use purely mathematical axioms, according to which probabilities are  abstract values without  any specific meaning. A prominent example for such an approach is Gleason's theorem~\cite{Gleason57}. It is based on a non-contextuality assumption: given a fixed quantum state $\psi \in \cH$,  the probability $P_{\psi}(z)$ assigned to any possible outcome~$z$  must be uniquely determined by the corresponding projector $\pi_z$ (and hence be independent of the other projectors that define the measurement). In addition, $\cH$ must be at least $3$-dimensional.  The theorem then asserts that \eqref{eq_tradBorn} holds.\footnote{More precisely, it asserts that $P(\cdot)$ is a convex combination of functions of the form~\eqref{eq_tradBorn}, for different choices of $\psi \in \cH$.}  

More recently, Saunders~\cite{Sau04} as well as  Auff{\`e}ves and Grangier~\cite{AufGra15b} proposed modifications of Gleason's argument, which use  more operationally motivated axioms.  Zurek demonstrated that \eqref{eq_tradBorn}   can also be obtained from a symmetry principle, called ``environment-assisted invariance''~\cite{Zur05}, together with certain assumptions about probabilities, made explicit in~\cite{SchlossFine05}. A similar derivation of the probabilistic Born rule has been proposed by Lesovik~\cite{Lesovik14}, who uses a symmetry assumption together with the assumption that the probability of finding a particle in a given region is determined by its wave function amplitude in this region.  The common feature of these arguments, as well as of Gleason's theorem, is that they are based on assumptions about the probability distribution of measurement outcomes. In that respect they are somewhat orthogonal to our objective, which is to have postulates that talk about actual outcomes rather than (abstract) probabilities.  

An approach to obtain the Born rule  without resorting to probabilistic axioms is the decision-theoretic argument by Deutsch~\cite{Deutsch99}, and later refined by Wallace~\cite{Wal09}.  They showed that under certain assumptions about rationality, if a system in state~$\psi$ is measured, a rational agent will bet on the outcome $z$ with the maximal value $P_{\psi}(z)$ as given by~\eqref{eq_tradBorn}. The statement they derive is thus similar in spirit to the  Bayesian interpretation of the Born rule, \BornBayes{}. The original claim that the axioms used in~\cite{Deutsch99,Wal09} are non-probabilistic has however been questioned by Barnum \emph{et al.}~\cite{Barnum00}, who remarked that the argument implicitly uses an assumption that relates probabilities to the indistinguishability of particular events.
 
A rather different line of reasoning, based on typicality arguments, was proposed by Everett~\cite{Everett57}, Finkelstein~\cite{Finkelstein63}, DeWitt and Graham~\cite{DeWitt70,Graham73}, and Hartle~\cite{Hartle68}, and was later strengthened by Farhi, Goldstone and Gutmann~\cite{Far89}. The idea is to start from an axiom similar to~\BornDet{} and apply it to an experiment that is repeated infinitely often. It is then shown that \BornFreq{} holds for ``typical'' sequences of measurement outcomes, i.e., for a set of sequences of weight~$1$ according to some probability measure on the set of infinite sequences. However, as pointed out by Caves and Schack~\cite{Cav05}, additional assumptions are needed in order to define this probability measure. They, as well as Cassinello and S\'anchez-G\'omez~\cite{Cassinello96}, argue more generally that an axiom like \BornDet{} alone cannot suffice to retrieve \BornFreq{}.

A possible additional assumption that could be used to complete  such typicality-based arguments was proposed by Buniy, Hsu and Zee~\cite{Buniy06}. Their idea is to postulate that the quantum state space is fundamentally discrete. This postulate is related to our axiom~\Robust{}, in the sense that a theory with discrete state space is by definition robust against small perturbations of the states.  A similar robustness requirement has also been considered by de Raedt, Katsnelson and Michielsen, and, together with  assumptions about logical inference, shown to imply certain results of quantum theory~\cite{DeRaedt14}.

Recently various approaches have been put forward to derive the state-space structure of quantum theory from physical axioms. They are however mostly based on the idea that probabilities are an irreducible physical concept (see, e.g., \cite{Hardy01,Mueller12,Dariano06}). These works are hence complementary to ours: while we take the Hilbert space formalism of quantum theory as given, we do not presuppose a physical notion of probabilities.

\section{Physics without probabilities} \label{sec_framework}
 
\begin{figure}[t]
\includegraphics[trim= 2.2cm 2.68cm 2.5cm 0.8cm, clip=true, width=0.48\textwidth]{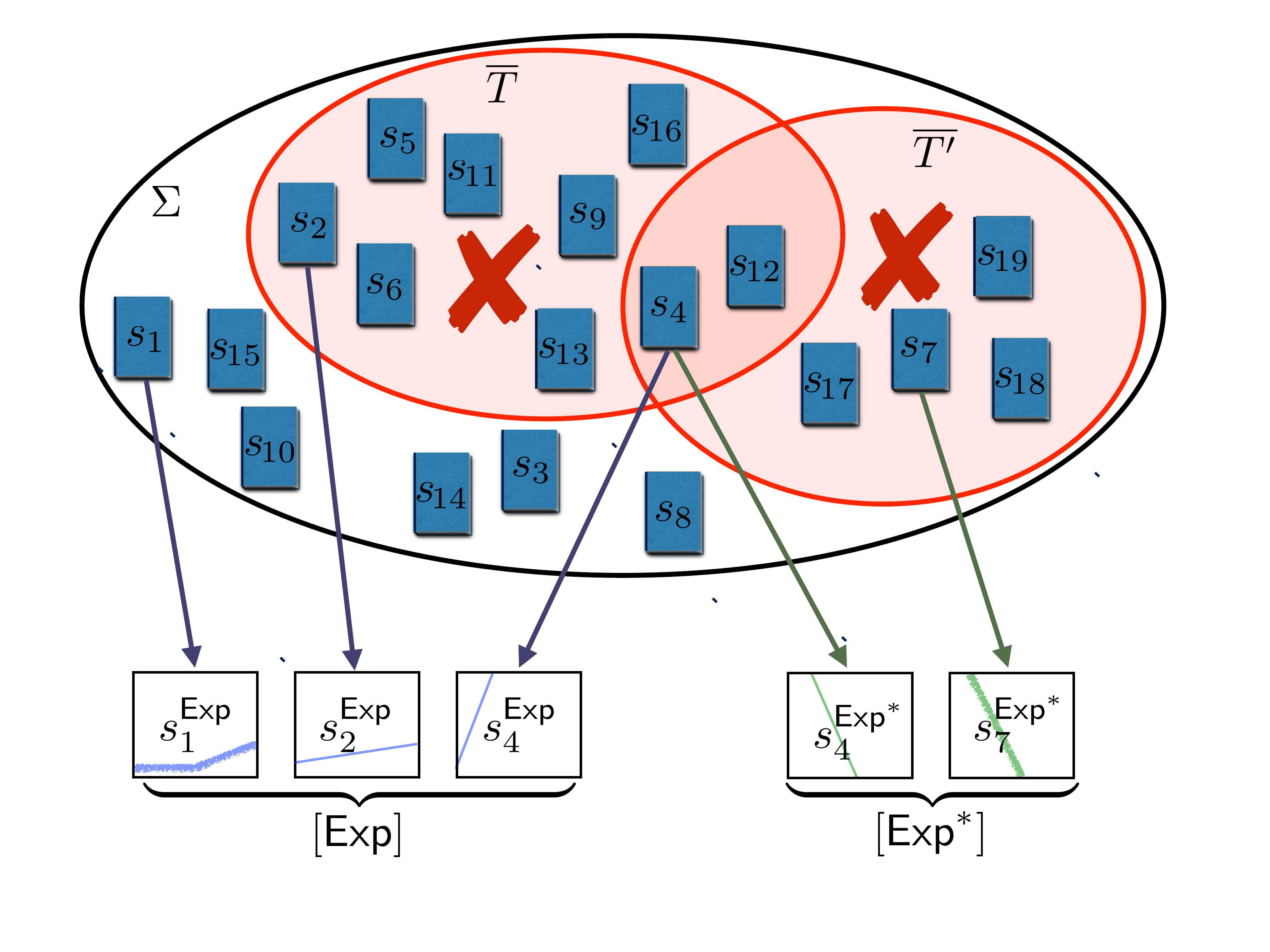}\caption{\emph{Structure of the framework.}   Physical theories are regarded as constraints on the set $\Sigma$ of possible stories. For example, $T$ could be quantum mechanics, and $T'$ special relativity. A story about a moving train that has an infinitely accurate location and velocity would be forbidden by~$T$, and hence be contained in the set~$\overline{T}$. A story according to which the train moves faster than the speed of light would be in $\overline{T'}$. The semantics of the stories is captured by their plots. A story~$s$ may specify plots for multiple experiments, e.g., $s^{\Exp} \! \subset \! [\Exp]$ and $s^{\Exp^*} \! \!  \! \subset \! [\Exp^*]$, for~$\Exp$ and~$\Exp^*$, respectively.  \label{fig_forbiddenallowed}
}
\end{figure}
 
 To formulate our claims we use a framework, introduced in~\cite{Fra16} (see also~\cite{FraRen16}), that allows us to reason about physical laws without relying on an \emph{a priori} notion of probabilities.\footnote{Most other frameworks, such as the Generalised Probabilistic Theories approach~\cite{Barrett07}, use conditional probability distributions to represent physical states.}  The approach is inspired by Deutsch, who maintains that the primary purpose of a physical theory is not to make predictions, but rather to ``tell stories'' that help us understand them~\cite{Deutsch97,Popper}. 

{\setlength{\abovedisplayskip}{5pt}
\setlength{\belowdisplayskip}{5pt}
Following this idea, a first basic ingredient to the framework we use is the notion of \emph{stories}.  Intuitively, they are  descriptions or explanations of a physical phenomenon. Technically,  we require that the set of all stories, which we denote by $\Sigma$, is countable (see the discussion at the end of Section~\ref{sec_Frequentist} for why this is important). One may think of them as (arbitrarily long) finite bit strings or, alternatively, finite sequences of English words.\footnote{Another possibility is to represent stories as sequences of ``clips'', a notion introduced in~\cite{Briegel12}.} An example would be 
\begin{align} \label{eq_trainstory}
  s_v = \textss{A train passes position $x=0$ at time $t=0$, and moves with velocity~$v$.} 
\end{align}
(where $v$ could be any value, e.g., $v=50 \,  \mathrm{m/sec}$).  A \emph{physical theory}, $T$,  provides criteria that \emph{forbid} certain stories. For example, any story $s_v$ with $v$ larger than the speed of light would be forbidden by special relativity. In the following, we denote by $\overline{T}$ the subset of~$\Sigma$ consisting of all stories that are forbidden by~$T$ (cf.\ Fig.~\ref{fig_forbiddenallowed}). 
}

The second basic ingredient to our framework is the notion of an \emph{experiment}. By definition, any experiment, $\Exp$, has an \emph{event space}, denoted by~$[\Exp]$. One should think of the elements of $[\Exp]$ as the events (observable or not) that are relevant to the analysis of the experiment. The event space~$[\Exp]$ is thus not dictated by nature, but rather can be chosen depending on the question one wishes to study.  For example, if one is interested in the kinematics of a train moving along a straight rail, one may consider an experiment $\Exp$ whose  event space $[\Exp]$ consists of pairs $(t, x) \in \mathbb{R}^2$, indicating the time and the position of the train (see Fig.~\ref{fig_Plot}).

\begin{figure}[t]
\includegraphics[trim= 2.5cm 15cm 2.5cm 2cm, clip=true, width=0.48\textwidth]{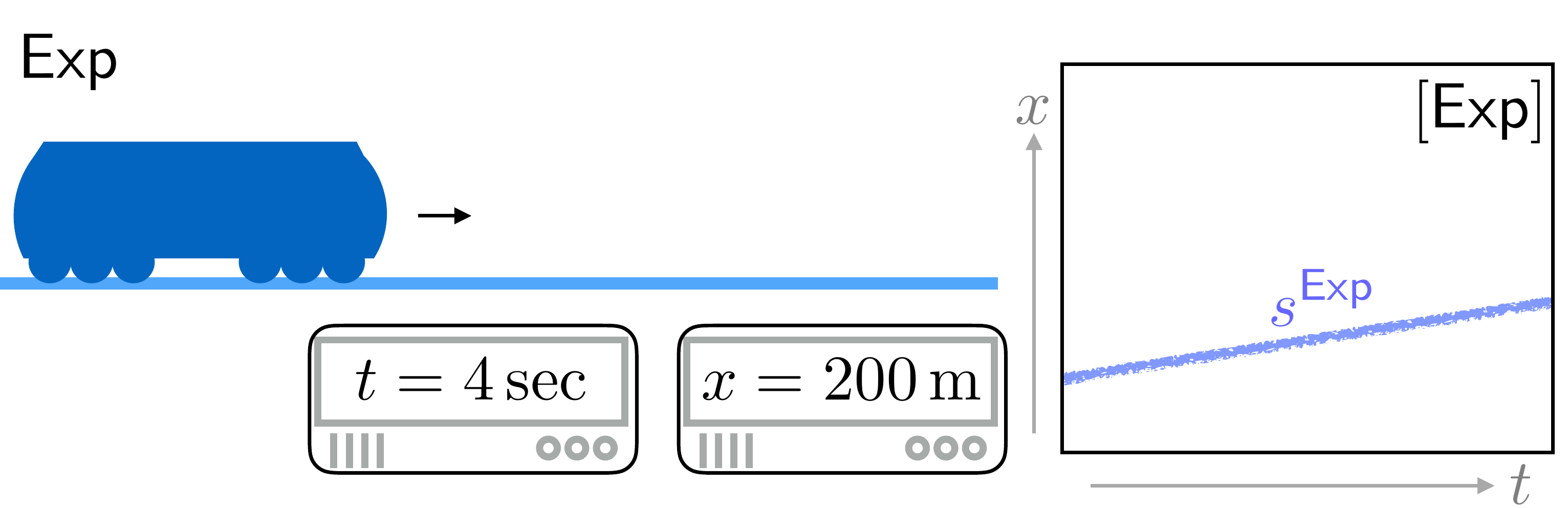}\caption{\emph{Event space and plot.}  Any experiment $\Exp$ has an event space $[\Exp]$. In the example experiment described in the text, the events are characterised by pairs $(t, x)$. A story  about~$\Exp$, such as $s_v$ (Eq.~\ref{eq_trainstory}), specifies a plot $s_v^{\Exp} \subset [\Exp]$ (Eq.~\ref{eq_trainstoryplot}), i.e., a subset of events that occur according to~$s_v$.
\label{fig_Plot}
}
\end{figure}

Given an experiment~$\Exp$, a story $s \in \Sigma$ may define a \emph{plot}, denoted by $s^{\Exp}$. This is a subset of $[\Exp]$, whose elements are to be interpreted as the events that \emph{actually} occur according to the story. For example, the plot of story~$s_v$ about~$\Exp$ could be taken to be 
\begin{align} \label{eq_trainstoryplot}
  s_v^{\Exp} = \{(t, {x = v t}): \, t \in \mathbb{R}\} \ .
\end{align}
Hence, for any experiment~$\Exp$, we have a function ${s \to s^{\Exp}}$ that equips stories~$s \in \Sigma$ with semantics, detailed in terms of plots. This function can be (and usually is) partial. That is,  one defines $s^{\Exp}$ only for particular stories~$s$ | those that tell us something about~$\Exp$ and are considered precise enough to specify a plot. 

Within this framework, laws of physics are expressed formally as conditions on the plots for appropriately chosen experiments. To talk about the laws of special relativity theory, for instance, one could employ experiments $\Exp$ with moving bodies such as the train example above. That the theory rules out velocities faster than the speed of light~$c$ could then be phrased as a condition on stories~$s$: it  should forbid any $s \in \Sigma$ that violates  
\begin{align*}
   \forall (t, x), (t', x') \in s^{\Exp} : \, (x'-x) < c(t'-t)  \ .
\end{align*}
 
\begin{figure}[t]
\centering
\includegraphics[trim= 1cm  4.1cm 1.1cm 6.8cm, clip=true, width=0.48\textwidth]{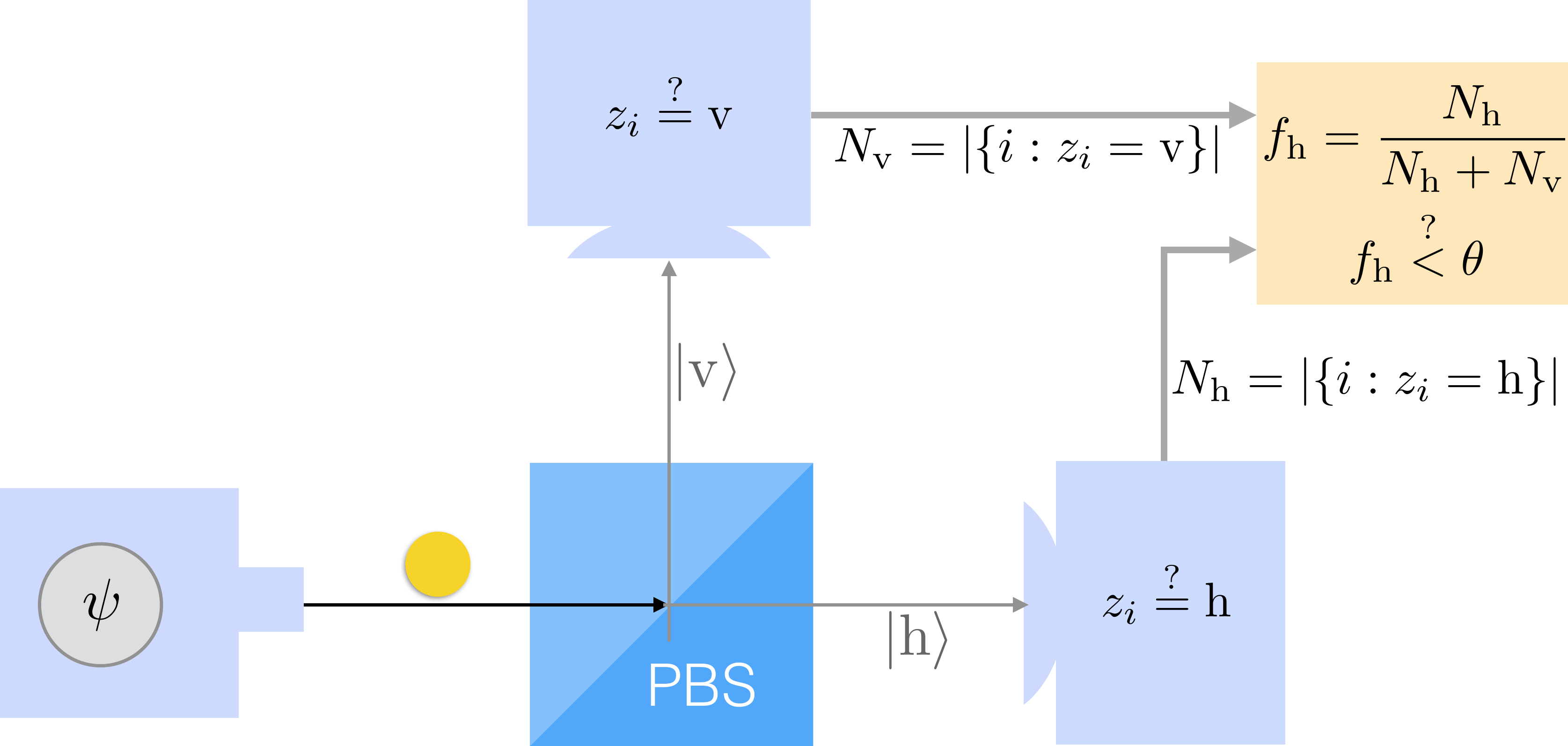}
\caption{\emph{Typical prepare-and-measure experiment.} A source emits photons with polarisation~$\psi$ towards a polarising beam splitter (PBS) that directs horizontally polarised photons to one detector and vertically polarised ones to another. The detection counts~$N_{\hor}$ and $N_{\ver}$ may be fed into an analyser that tests whether the relative frequencies are within a given range.
\label{fig_ExpLight}
}
\end{figure}

{\setlength{\abovedisplayskip}{5pt}
\setlength{\belowdisplayskip}{3pt}
For our study of the Born rule, we will consider general prepare-and-measure experiments. A typical example could consist of a source that emits polarised photons, which are subsequently measured to distinguish vertical and horizontal polarisations (Fig.~\ref{fig_ExpLight}). Denoting the polarisation state space of each photon by $\cK$,  the overall state space of all emitted photons can be written as 
\begin{align} \label{eq_Fock}
   \cH = \bigoplus_{n \in \mathbb{N}_0} \cK^{\otimes n} \ .
\end{align}
A natural choice for the event space of this experiment is then}
\begin{align} \label{eq_PMEventspace}
  [\Exp] = \bigl\{ (\Psi, z) :  \, \Psi  \in \cH, \,  z\!=\!(z_1, \ldots, z_n) \in \{\hor, \ver\}^{*} \bigr\} 
\end{align}
where $\{\hor, \ver\}^{*}$ denotes the set of  outcome tuples $z=(z_1, \ldots, z_n)$   of arbitrary (but finite) length~$n$, with $z_i \in \{\hor, \ver\}$ indicating whether the $i$th measurement gave horizontal or vertical polarisation. A few examples of stories one can tell about this experiment are listed in Table~\ref{tab_StoriesMeas}.  According to the standard understanding of the Born rule, stories $s_2$, $s_3$, and~$s_6$ should be ruled out, whereas the situation is less obvious for~$s_4$.

\section{Retrieving the Frequentist Rule} \label{sec_Frequentist}

The aim of this section is to make precise and prove claim~\eqref{eq_claimBornFreq}. We start by providing definitions for the expressions \BornDet{}, \Robust{}, and \BornFreq{}. To formulate them, we take for granted that states of a quantum system can be represented by vectors in a Hilbert space~$\cH$, and that measurements on that system correspond to families of projectors $\{\pi_z\}_{z \in \cZ}$ on $\cH$ such that $\sum_{z \in \cZ} \pi_z = \id_{\cH}$. (This includes situations where the system of interest is entangled with its environment, or where the measurement is not projective, in which case $\cH$ must be taken to be the joint Hilbert space of the system together with parts of the environment.) We always assume that $\cH$ is separable and that $\cZ$ is countable. 

We denote by~$\ExpMeas_{\cH, \{\pi_z\}}$ the set of all prepare-and-measure experiments in which a system is prepared in a state~$\psi \in \cH$ and measured with respect to $\{\pi_z\}_{z \in \cZ}$. Following the approach introduced in Section~\ref{sec_framework}, we may assign to them the event space
\begin{align*}
  [\Exp] = \{(\psi, z) : \, \psi \in \cH, \, z \in \cZ\} \ .
\end{align*}
Note that the experiment described around~\eqref{eq_PMEventspace} can be regarded as one of $\ExpMeas_{\cH, \{\pi_z\}}$  | just identify $z$ with the tuple $(z_1, \ldots, z_n)$ of individual measurement outcomes, and  $\psi$  with the joint state $\Psi$ of all measured subsystems.

 \begin{figure}[t]
 \begin{center}\includegraphics[trim= 2.5cm 16cm 2.5cm 2.7cm, clip=true, width=0.48\textwidth]{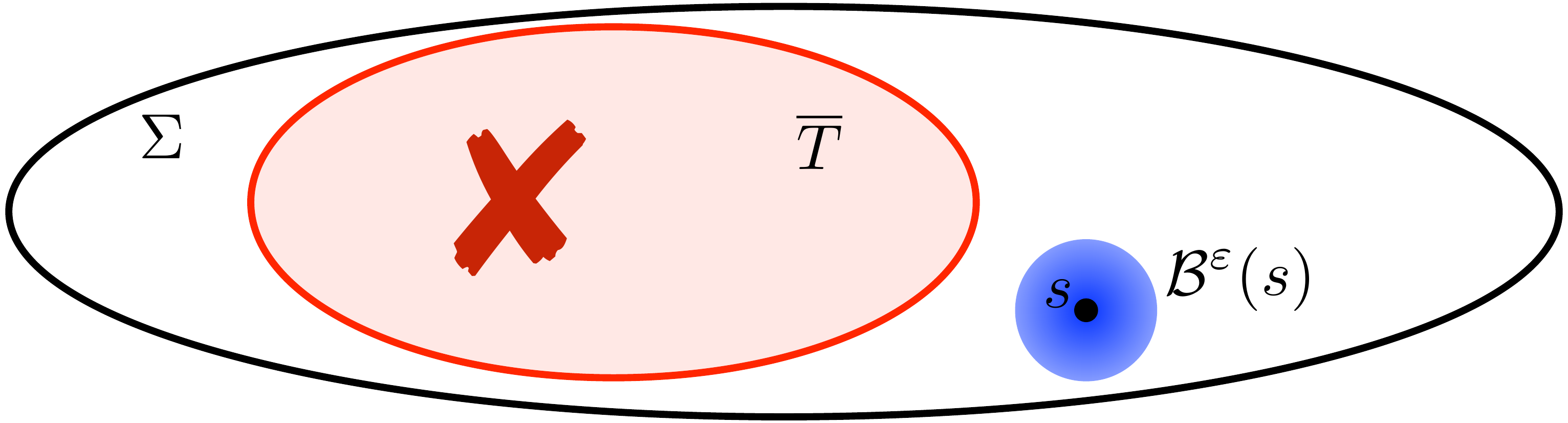}
\end{center}
\caption{\emph{Robustness.} A theory $T$ satisfies \Robust{} if for any story~$s\in \Sigma$ that is not forbidden by $T$ there exists a neighbourhood $\cB^{\varepsilon}(s)$ of stories that are  not forbidden, either.  This means that the set $\overline{T}$ of forbidden stories is closed.}
\label{fig_Robustness}
\end{figure}

\newcommand*{\hseprule}{ \arrayrulecolor{lightgray} \specialrule{.05em}{0.2em}{0.2em} }
\setlength{\tabcolsep}{6.8pt}
\begin{table*}[ht!]
{\small 
\begin{tabular}{cccc}
 \arrayrulecolor{black} \specialrule{0.1em}{0em}{0.3em}
   & \shortstack{story \\{}} &  \shortstack{plot $s^{\Exp}$  \\  $\Exp \in \ExpMeas_{\cH, \{\pi_{(z_1, \ldots, z_n)}\}}$} & \shortstack{plot $s^{\Exp^*}$   \\  $\Exp^* \in \ExpMeasRep_{\cK, \{\pi_z\}, f_{\hor} < 0.6}$} \\
 \arrayrulecolor{black} \specialrule{0.06em}{0.3em}{0.3em}
  $s_1$ & \texts{The source repeatedly emits photons with polarisation $\ket{\hor}$. The measurement outcomes are always $\hor$.} & $\bigl\{ (\ket{\hor}^{\otimes n}, [\hor \hor \hor \hor \cdots]_1^n :  n \in \mathbb{N} \bigr\}$ & $\bigl\{ (\ket{\hor}^{\otimes n}, n) :  n \in \mathbb{N} \bigr\}$ \\
   \hseprule
    $s_2$ & \texts{The source repeatedly emits photons with polarisation $\ket{\hor}$. The measurement outcomes alternate between $\ver$ and~$\hor$.} &  $\bigl\{ (\ket{\hor}^{\otimes n}, [\ver \hor \ver \hor \cdots ]_1^n :  n \in \mathbb{N} \bigr\}$ & $\bigl\{ (\ket{\hor}^{\otimes n}, \mathrm{ok}) :  n \in \mathbb{N} \bigr\}$  \\
   \hseprule
  $s_3$ & \texts{The source repeatedly emits photons with polarisation $\ket{\diag}$. The measurement outcomes are always~$\hor$.}  &  $\bigl\{ (\ket{\diag}^{\otimes n}, [\hor \hor \hor \hor \cdots]_1^n : n \in \mathbb{N} \bigr\}$ & $\bigl\{ (\ket{\diag}^{\otimes n}, n) : n \in \mathbb{N} \bigr\}$ \\
   \hseprule
    $s_4$ & \texts{The source repeatedly emits photons with polarisation $\ket{\diag}$. The measurement outcomes alternate between $\ver$ and~$\hor$.} &  $\bigl\{ (\ket{\diag}^{\otimes n}, [\ver \hor \ver \hor \cdots]_1^n  : n \in \mathbb{N} \bigr\}$  & $\bigl\{ (\ket{\diag}^{\otimes n}, \mathrm{ok}) : n \in \mathbb{N} \bigr\}$ \\
   \hseprule
    $s_5$ & \texts{The source repeatedly emits photons with polarisation $\ket{\diag}$. The number of measurement outcomes $\ver$ is at any time at least as large as the number of outcomes~$\hor$.} &  undefined & $\bigl\{ (\ket{\diag}^{\otimes n}, \mathrm{ok}) : n \in \mathbb{N} \bigr\}$ \\
   \hseprule
    $s_6$ & \texts{The source repeatedly emits photons with polarisation $\ket{\diag}$. The number of measurement outcomes $\hor$ is at any time at least twice as large as the number of outcomes~$\ver$.}  &  undefined  & $\bigl\{ (\ket{\diag}^{\otimes n}, n) : n \in \mathbb{N} \bigr\}$ \\
 \arrayrulecolor{black} \specialrule{0.1em}{0.3em}{0em}
\end{tabular}
}

\caption{\emph{Example stories and their plots.} The stories talk about a prepare-and-measure experiment as depicted by Fig.~\ref{fig_ExpLight}. We write $\ket{\ver}$ and $\ket{\hor}$ for the vertical and horizontal polarisation, respectively, and $\ket{\diag} =
  (\ket{\hor}  +  \ket{\ver}) / \! \sqrt{2}$ for one of the diagonal directions. The precise meanings of the stories is specified by their plots. The third column shows the plots for the case where the event space is taken to be that of~\eqref{eq_PMEventspace}. Its elements specify the joint prepared state and the sequence of outcomes~$z = (z_1, \ldots, z_n)$.  Stories~$s_5$ and $s_6$ are not precise enough to define these. The right column shows the plots for the variant of the experiment that is used to formulate \BornFreq{}, whose event space is defined by~\eqref{eq_PMERepeventspace}. Here the output sequence $(z_1, \ldots, z_n)$ is replaced by the result  $t$ of a test; $t=\text{``$\mathrm{ok}$''}$ if the frequency $f_{\hor}$ of results~$z_i = \hor$ is below an upper threshold, $\theta = 0.6$, and $t=n$ otherwise.\label{tab_StoriesMeas}}
\end{table*}

\subsection*{Postulate~\BornDet{}} 

Postulate~\BornDet{} asserts that an outcome~$z$  whose projector $\pi_z$ has no overlap with~$\psi$ cannot occur. Within our framework, the postulate is phrased as a property of a theory~$T$.

\begin{nobreak}
\begin{emphbox}
  $T$ satisfies~\BornDet{} if it forbids any story~$s$ according to which the implication
  \begin{align*}
     \pi_z \psi =  0 \implies (\psi, z) \notin s^{\Exp} 
   \end{align*}
   is violated for some $\Exp \in \ExpMeas_{\cH, \{\pi_z\}}$.
\end{emphbox}
\end{nobreak}

Note that this condition is pretty weak, for the implication is only violated if $\psi$ is exactly orthogonal to $\pi_z$.  It rules out none of the examples of Table~\ref{tab_StoriesMeas}.

\subsection*{Postulate~\Robust{}}

Our second postulate, \Robust{}, makes criteria like the above tolerant to perturbations. It demands from a theory~$T$ that any story that is not forbidden by~$T$ has a neighbourhood of non-forbidden stories (cf.\ Fig.~\ref{fig_Robustness}). 

To turn this into a precise statement, we use that the event space $[\Exp]$ of any prepare-and-measure experiment  $\Exp \in \ExpMeas_{\cH, \{\pi_z\}}$ has a natural metric~$d$  induced by the inner product $\spr{\cdot}{\cdot}$ of the Hilbert space~$\cH$, i.e., 
\begin{align*}
  d\bigl((\psi_1, z_1), (\psi_2, z_2)\bigr) = \begin{cases}  \sqrt{\spr{\psi_1\!-\!\psi_2}{\psi_1\!-\!\psi_2}} & \text{if $z_1 = z_2$} \\ \infty & \text{if $z_1 \neq z_2$.} \end{cases}
\end{align*}
Since plots are subsets of the event space, we need to turn $d$ into a metric for sets of events. The canonical way to do this is to use the corresponding Hausdorff distance, which we denote by~$D$.\footnote{The \emph{Hausdorff distance} $D$ between two subsets $S$ and $S'$ of a metric space $(M, d)$ is defined as $D(S, S') = \inf \{\varepsilon: {S \subset\cB^{\varepsilon}(S')} \, \wedge \, {S' \subset \cB^{\varepsilon}(S)}\}$, where $\cB^{\varepsilon}(\cdot)$ denotes the $\varepsilon$-ball around its argument set, i.e., $\cB^{\varepsilon}(S) = \bigcup_{x \in S} \{{y \in M} : \, {d(x, y)} \leq \varepsilon\}$.}  The metric~$D$ can then be pulled back to the set of stories~$\Sigma$ using the (partial) function $s \mapsto s^{\Exp}$. That is, we define $D^{\Exp}(s_1, s_2) = D(s_1^{\Exp}, s_2^{\Exp})$, with the convention that $D^{\Exp}(s_1, s_2) = \infty$ for $s_1 \neq s_2$ whenever $s_1^{\Exp}$ or $s_2^{\Exp}$ is undefined. In this way, any experiment~$\Exp$ gives rise to a canonical metric, $D^{\Exp}$, on the set $\Sigma$ of stories.  

\begin{emphbox}
  $T$ satisfies \Robust{} if the set $\overline{T}$ of stories it forbids is closed w.r.t.\ the topology induced by~$D^{\Exp}$, for any $\Exp \in \ExpMeas_{\cH, \{\pi_z\}}$.
\end{emphbox}

\subsection*{Postulate \BornFreq{}}

Before stating the precise definition of \BornFreq{}, let us first have a look at the familiar example of a (classical) coin tossing experiment. If we repeatedly throw a fair coin, we would expect the relative frequency $f_{\mathrm{heads}}$ of outcome ``heads'' to get closer and closer to $\frac{1}{2}$, but to keep fluctuating around that value. To make a more definitive statement, we may fix a threshold $\theta > \frac{1}{2}$ and, after any toss, check whether  $f_{\mathrm{heads}}$ is below $\theta$. Suppose that in each round when this is not the case, i.e.,  whenever $f_{\mathrm{heads}} \geq \theta$, we get a yellow card. If $\theta$ is only slightly larger than $\frac{1}{2}$, we will probably get many yellow cards at the beginning. However, they will become more and more rare, until we will stop getting any. Intuitively, a condition for a coin to be fair is hence that, for any threshold $\theta > \frac{1}{2}$, we will only get a finite number of yellow cards. 

Applying this idea to quantum mechanics, we may consider a prepare-and-measure experiment where many identical subsystems with Hilbert space $\cK$ are prepared and measured individually with respect to a family of projectors $\{\pi_{z_i}\}_{{z_i} \in \cZ}$ on $\cK$. Just like in the case of coin tossing, one can think of the preparation and measurement of the subsystems $\cK$ as a sequential process (cf.\ Fig.~\ref{fig_ExpLight} for an example). Suppose that the experiment is equipped with a test device that continuously checks whether the relative frequency $f_{\zspec}$ of a predefined outcome $\zspec \in \cZ$ in the measured tuple $(z_1, \ldots, z_n)$  is below a fixed threshold, $\theta$. Whenever this is the case, it outputs $t=\text{``$\mathrm{ok}$''}$, and else $t=n$, corresponding to a yellow card with the number~$n$ of the current round written on it. (This number will make the counting of yellow cards easier.) The test device thus effectively carries out a measurement on the overall state space~$\cH$, which has the form~\eqref{eq_Fock}, with respect to the family of projectors $\{\Pi_t\}_{t \in \{\mathrm{ok}\} \cup \mathbb{N}}$ defined by
\begin{align} \label{eq_projectorform}
  \Pi_n = \sum_{\substack{(z_1, \ldots, z_n) \in \cZ^{\times n} \\|\{i: \, z_i = \zspec\}| \geq \theta n}} \bigotimes_{i=1}^n  \pi_{z_i} 
\end{align}
for $t = n \in \mathbb{N}$, and $\Pi_{\mathrm{ok}} = \id_{\cH}-\bigoplus_{n=1}^\infty \Pi_n$ for $t = \mathrm{ok}$. We denote the set of all such experiments, which is a subclass of $\ExpMeas_{\cH, \{\Pi_t\}}$, by~$\ExpMeasRep_{\cK, \{\pi_z\}, f_\zspec < \theta}$. Any experiment $\Exp^*$ of this class  has an event space of the form
\begin{align} \label{eq_PMERepeventspace}
  [\Exp^*] =  \{ (\Psi, t) : \, \Psi \in \cH, \, t \in \{\mathrm{ok}\} \cup \mathbb{N} \}  \ ,
\end{align}
where $\Psi$ is the joint state of all subsystems and $t$ is  the outcome of the test  (cf.\ Table~\ref{tab_StoriesMeas} for example plots). 

Postulate~\BornFreq{} refers to the particular case where all subsystems are  prepared in the same state $\psi \in \cK$, so that their joint state, $\Psi \in \cH$, lies in the subspace
\begin{align*}
  \cH_{\psi} = \mathrm{span} \{\psi^{\otimes n}: \, n \in \mathbb{N}_0\} \ .
\end{align*} 
The postulate demands that, for any given threshold $\theta > \| \pi_\zspec \psi \|^2$,  if one repeats the measurement $\{\pi_{z_i}\}_{z_i \in \cZ}$ sufficiently often, the relative frequency $f_{\zspec}$ of outcome $\zspec$ will remain below~$\theta$. In other words, for any story~$s$, there is an upper bound on the number~$n \in \mathbb{N}$ for which a yellow card is issued according to~$s$.

\begin{emphbox}
 $T$ satisfies \BornFreq{} if it forbids any story~$s$ according to which the implication
  \begin{align*}
    \| \pi_\zspec \psi \|^2 <  \theta 
    \implies
    \bigl|\!\{n : \exists \Psi \in \cH_\psi,  (\Psi, n) \in s^{\Exp^*}\}\!\bigr| < \infty
 \end{align*}
 is violated for some $\Exp^* \in \ExpMeasRep_{\cK, \{\pi_z\}, f_{\zspec} < \theta}$.
 \end{emphbox}

This criterion obviously rules out stories $s_2$, $s_3$, and $s_6$ of Table~\ref{tab_StoriesMeas}. Considering~$s_4$, one may also regard the creation and measurement of two photons  as one single repetition of a prepare-and-measure experiment, with prepared state~$\ket{\diag}^{\otimes 2}$ and  a measurement with respect to $\{{\pi_{\ver} \otimes \pi_{\ver}}, {\pi_{\ver} \otimes \pi_{\hor}}, {\pi_{\hor} \otimes \pi_{\ver}}, {\pi_{\hor} \otimes \pi_{\hor}}\}$. It is then straightforward to see that the criterion also rules out~$s_4$. 

\subsection*{Claim~\eqref{eq_claimBornFreq}}

Everything is now in place to state the first main result.

\begin{emphbox}
\vspace{-1.5ex}
\begin{theorem}  \label{thm_BornFreq}
 If a theory $T$ satisfies \BornDet{} and \Robust{} then it also satisfies \BornFreq{}. 
\end{theorem}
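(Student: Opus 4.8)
The plan is to argue by contraposition, recast as a statement about the forbidden set. Write $F_{\mathrm{Det}}$ for the set of stories violating the implication in \BornDet{}, and $F_{\mathrm{Freq}}$ for those violating the implication in \BornFreq{}. The hypothesis \BornDet{} says precisely that $\overline{T}\supseteq F_{\mathrm{Det}}$, while \Robust{} says $\overline{T}$ is closed in the topology induced by $D^{\Exp^*}$. Hence $\overline{T}$ must contain the $D^{\Exp^*}$-closure of $F_{\mathrm{Det}}$, and the whole theorem reduces to the purely geometric claim
\begin{align*}
  F_{\mathrm{Freq}} \subseteq \mathrm{cl}(F_{\mathrm{Det}}) \ ,
\end{align*}
i.e.\ that every story violating \BornFreq{} can be approximated arbitrarily well, in the metric $D^{\Exp^*}$, by stories violating \BornDet{}. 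Once this is shown, $\overline{T}\supseteq F_{\mathrm{Freq}}$, which is exactly the assertion that $T$ satisfies \BornFreq{}. The point of this reformulation is that \BornDet{} alone is far too weak (it bites only at exact orthogonality), and the entire strength of the statement is extracted from \emph{closedness} of the forbidden set.

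The quantitative heart of the argument is the behaviour of $\Pi_n$ on the product state $\psi^{\otimes n}$. Since the projectors $\bigotimes_i \pi_{z_i}$ in~\eqref{eq_projectorform} are mutually orthogonal, one computes
\begin{align*}
  \| \Pi_n \psi^{\otimes n} \|^2 = \sum_{k \geq \theta n} \binom{n}{k}\, p^k (1-p)^{n-k} \ , \qquad p := \|\pi_\zspec \psi\|^2 \ ,
\end{align*}
which is the probability that a $\mathrm{Binomial}(n,p)$ variable exceeds $\theta n$. Under the premise $p<\theta$ of \BornFreq{}, the weak law of large numbers — or a Hoeffding bound $\|\Pi_n\psi^{\otimes n}\|^2 \le e^{-2n(\theta-p)^2}$ — forces this to tend to $0$ as $n\to\infty$. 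More generally, for any normalised $\Psi\in\cH_\psi$ the vector $\Pi_n\Psi$ lives in the $n$-photon sector and obeys $\|\Pi_n\Psi\|\le\|\Pi_n\psi^{\otimes n}\|\to 0$, which is what I will need, since the yellow-card states of a given story need not be exactly $\psi^{\otimes n}$.

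With this in hand I would build the approximating forbidden stories. Take any $s\in F_{\mathrm{Freq}}$: there is an $\Exp^*\in\ExpMeasRep$ with $p<\theta$ for which infinitely many yellow cards occur, i.e.\ $(\Psi_n,n)\in s^{\Exp^*}$ with $\Psi_n\in\cH_\psi$ for all $n$ in some infinite set $N$. Fix $\varepsilon>0$. By the previous paragraph the quantity $\|\Psi_n-\Psi_n'\|$, where $\Psi_n' := (\id-\Pi_n)\Psi_n/\|(\id-\Pi_n)\Psi_n\|$ is the normalised projection onto the kernel of $\Pi_n$, satisfies $\|\Psi_n-\Psi_n'\|^2 = \|\Pi_n\Psi_n\|^2 + \bigl(1-\sqrt{1-\|\Pi_n\Psi_n\|^2}\bigr)^2 \to 0$ along $N$. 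Choose $n\in N$ so large that $\|\Psi_n-\Psi_n'\|\le\varepsilon$, and let $s'$ be the story obtained from $s$ by replacing the single event $(\Psi_n,n)$ with $(\Psi_n',n)$. Then $\Pi_n\Psi_n'=0$, so $s'\in F_{\mathrm{Det}}$; and since only one event at the fixed outcome $t=n$ is moved (by $\le\varepsilon$) and nothing else is touched, $D^{\Exp^*}(s,s')\le\varepsilon$. Letting $\varepsilon\to 0$ exhibits $s$ as a limit of stories in $F_{\mathrm{Det}}$, establishing the geometric claim and hence the theorem.

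The \emph{main obstacle} is twofold. First, the perturbation shrinks only because $\theta>p$ strictly — exactly the premise of \BornFreq{} — so one must be careful that \BornDet{}'s weakness is repaired purely by pushing $n$ along $N$ to infinity, where the overlap $\|\Pi_n\psi^{\otimes n}\|$ becomes arbitrarily small; the Hausdorff bookkeeping must then confirm that altering a single far-out yellow card genuinely leaves $D^{\Exp^*}(s,s')\le\|\Psi_n-\Psi_n'\|$ and does not inadvertently move the rest of the plot. Second, and more delicately, the approximant $s'$ must be a \emph{genuine} element of the countable set $\Sigma$ rather than merely an abstract plot; I expect verifying that such approximating stories are always available in $\Sigma$ to be the subtle point, and the reason the countability of $\Sigma$ is emphasised.
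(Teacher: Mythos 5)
Your proposal is correct and takes essentially the same route as the paper's own proof: both perturb the yellow-card states of a \BornFreq{}-violating story so that they become exactly orthogonal to $\Pi_n$ (controlling the size of the perturbation via the typicality/Hoeffding bound $\|\Pi_n \psi^{\otimes n}\|^2 \leq e^{-2n(\theta - \|\pi_\zspec\psi\|^2)^2}$, so it vanishes as $n \to \infty$), thereby producing \BornDet{}-violating stories converging to $s$ in $D^{\Exp^*}$, and then invoke the closedness of $\overline{T}$ guaranteed by \Robust{}. The only difference is cosmetic: the paper applies one fixed projector $F_{\theta}^{\geq m}$ uniformly to all events with states in $\cH_\psi$ (requiring the uniform convergence~\eqref{eq_stateconv}), whereas you move a single yellow-card event at a time, which needs only the bound $\|\Pi_n \Psi_n\| \leq \|\Pi_n \psi^{\otimes n}\|$ along the infinite yellow-card sequence.
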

\end{emphbox}

\begin{proof}
  Consider any experiment $\Exp^*$ from the set $\ExpMeasRep_{\cK, \{\pi_z\}, f_\zspec < \theta}$, with $\theta \in [0,1]$, as well as any $\psi \in \cK$ such that $\| \pi_\zspec \psi\|^2 < \theta$. We need to show that any story~$s$ that violates the implication in the definition of~\BornFreq{}, i.e., for which
\begin{align} \label{eq_infinityn}
  \bigl|\{n : \exists \Psi \in \cH_\psi,  (\Psi, n) \in s^{\Exp^*}\}\bigr| = \infty
 \end{align}
holds,   is forbidden by any theory~$T$ that satisfies \BornDet{} and \Robust{}.
  
  For any $m \in \mathbb{N}_0$, let
\begin{align*}
  F_{\theta}^{\geq m} = \biggl( \bigoplus_{n=0}^{m-1} \id_{\cK}^{\otimes n} \biggr) \oplus \biggl( \bigoplus_{n=m}^{\infty} \left( \id_{\cK}^{\otimes n}- \Pi_n \right) \biggr) \ ,\end{align*}
 with $\Pi_n$ (which depends on $\theta$) defined by~\eqref{eq_projectorform}, be the projector onto the subspace of~$\cH$ associated to all outcome tuples $z=(z_1, \ldots, z_n)$ except those that have at least length~$m$ and whose relative frequency of entries $z_i=\zspec$ is at least $\theta$.  In particular, we have 
\begin{align} \label{eq_Fcut}
  n  \geq m \quad \implies \quad \Pi_n F_{\theta}^{\geq m} = 0 \ .
\end{align}
We first argue that applying the projector $F_{\theta}^{\geq m}$ onto any state $\Psi \in \cH_{\psi}$ leaves that state almost unchanged when $m$ is large. For this we use that a state of the form $\psi^{\otimes n}$ lies, for $n$ large, almost entirely in the (typical) subspace generated by projectors $\pi_{z_1} \otimes \cdots \otimes \pi_{z_n}$ for tuples $(z_1, \ldots, z_n)$  whose relative frequency of $\zspec$ is close to $\|\pi_\zspec \psi\|^2$. A quantitative variant of this statement is  Lemma~\ref{lem_distanceconv} in the appendix, with $\pi_0 = \pi_\zspec$ and $\pi_1$ the sum of all $\pi_z$ with $z \neq \zspec$. Writing $\Psi = \sum_n \alpha_n \psi^{\otimes n}$, where $\alpha_n$ are coefficients such that $\sum_{n} |\alpha_n|^2 = 1$,  we obtain 
\begin{align*}
  \bra{\Psi} F_{\theta}^{\geq m} \ket{\Psi} 
  & = \sum_{n \in \mathbb{N}_0} |\alpha_n|^2 \bra{\psi^{\otimes n}} F_{\theta}^{\geq m} \ket{\psi^{\otimes n}} \\
  &  = 1- \sum_{n \geq m} |\alpha_n|^2 \bra{\psi^{\otimes n}} \Pi_n \ket{\psi^{\otimes n}}  \\
  & \geq 1 - \max_{n \geq m} n \, e^{-2n(f - \| \pi_\zspec \psi\|^2)^2} \ .
\end{align*}
The expression in the maximum over $n$ tends to $0$ for $n$ large. We have thus established that
\begin{align} \label{eq_stateconv}
   \lim_{m \to \infty} \sup_{\Psi \in \cH_\psi} \bigl\| F_{\theta}^{\geq m} \Psi - \Psi \bigr\| = 0 \ .
\end{align}

Next we use $F_{\theta}^{\geq m}$ to modify stories. Let $\cF_{\theta, \psi}^{\geq m}$ be the function on the event space $[\Exp^*]$ defined by
\begin{align*}
  \cF_{\theta, \psi}^{\geq m}(\Psi, t) = \begin{cases}
    (\frac{F_{\theta}^{\geq m} \Psi}{\|F_{\theta}^{\geq m} \Psi\|}, t) & \text{if $\Psi \in \cH_{\psi}$} \\
    (\Psi, t) & \text{otherwise.}
    \end{cases}
\end{align*}
Applying this function element-wise to the events of a plot induces a function on the set $\Sigma$ of stories, which we also denote by $\cF_{\theta, \psi}^{\geq m}$. That is, for any $s \in \Sigma$,  $\cF_{\theta, \psi}^{\geq m}(s)$ is a story identical to~$s$, except that its plot for experiment~$\Exp^*$ is modified to
\begin{align*}
  \cF_{\theta, \psi}^{\geq m}(s)^{\Exp^*} = \{ \cF_{\theta, \psi}^{\geq m}(\Psi, t) : \, (\Psi, t) \in s^{\Exp^*}\} \ .
\end{align*}
It follows from~\eqref{eq_stateconv} that
\begin{align} \label{eq_storyconv}
 \lim _{m \to \infty} D^{\Exp^*}\!(\cF_{\theta, \psi}^{\geq m}(s),  s) = 0 \ .
\end{align}

Consider now any theory~$T$ that satisfies \BornDet{} and \Robust{}, and let $s$ be any story for which~\eqref{eq_infinityn} holds. The latter implies that for any $m \in \mathbb{N}$ there exists $n \geq m$ and $\Psi \in \cH_\psi$ such that 
\begin{align*}
   \left({\textstyle \frac{F_{\theta}^{\geq m} \Psi}{\|F_{\theta}^{\geq m} \Psi\|}, n} \right)  \in \cF_{\theta, \psi}^{\geq m}(s)^{\Exp^*} \ .
\end{align*}
According to~\eqref{eq_Fcut}, the projector $\Pi_n$ has no overlap with the modified state on the left hand side.  \BornDet{} thus implies that  the modified story is forbidden, i.e., 
\begin{align*}
 \cF_{\theta, \psi}^{\geq m}(s) \in \overline{T}  \ ,
\end{align*}
for any $m \in \mathbb{N}$. In addition, we know from~\eqref{eq_storyconv} that the modified stories, for $m$ large,  approximate~$s$, i.e., 
\begin{align*}
  \lim_{m \to \infty} \cF_{\theta, \psi}^{\geq m}(s) = s \ .
\end{align*}
Since $\overline{T}$ is  closed due to \Robust{}, we can conclude that $\overline{T}$ must also contain~$s$. We have thus established that $s$ is forbidden by~$T$, which is what we set out to prove. 
\end{proof}

Before concluding this section, we note that the converse of Theorem~\ref{thm_BornFreq} may not hold. That is, there could exist a theory $T$ that satisfies  \BornFreq{}, but violates \mbox{\BornDet{}} or \Robust{}.  To see this, consider a story~$s$  that is defined like~$s_4$ in Table~\ref{tab_StoriesMeas}, but asserts that the sequence of individual outcomes $z_i$, instead of alternating between $\hor$ and $\ver$, corresponds to a binary representation of the number~Pi. Since Pi has a finite description, $s$ is a valid story. Furthermore, since the binary representation of Pi has no (known) repeating pattern, a theory $T$ can satisfy \BornFreq{} without forbidding~$s$. Conversely, following the lines of the proof above, one can construct a sequence of stories that approximate $s$ and yet violate the condition in \mbox{\BornDet{}}.  Hence, if $T$ satisfies~\BornDet{} and~\Robust{} then $T$ must forbid~$s$, too. This establishes that \BornDet{} and \Robust{} together are more restrictive than \BornFreq{}.

A similar argument also shows why it is important to demand that $\Sigma$ be countable. If it was not, $\Sigma$ could contain stories like $s$ above, but now for all possible infinite sequences of outcomes. A theory that satisfies \BornDet{} and \Robust{} would then necessarily rule out all of them.

\section{Reintroducing Probabilities} \label{sec_Bayes}

In this section we explain and prove claim~\eqref{eq_claimBornBayes},  thus establishing a Bayesian reading of the Born rule. We start by providing definitions for  \Repeat{}, \Symmetry{}, and \BornBayes{}. Unlike \BornDet{} and \Robust{}, which we  defined in the previous section, they should not be regarded as \emph{(objective) physical postulates}. Rather, they are attributes of an \emph{agent's subjective belief} about future observations. They also do not depend on quantum mechanics. To phrase them, we may consider any data-generating process, provided that  it is in principle repeatable.

According to the Bayesian approach, an agent's belief is represented by probabilities~\cite{deFin89,deFin37}. Consider, for instance, a quantity~$Z$ that is only revealed later to the agent. We may then assign to any possible value  $z$ of $Z$  a real number, $P(z)$, with  the following meaning:
\begin{align*}
  P(z)  \cong \texts{maximum amount that the agent would be ready to pay for a bet with payoff $1$ if $Z=z$ and $0$ otherwise}
\end{align*}
One can show that, if the agent's reasoning is rational, in the sense that no combination of her bets would lead to a sure loss, then the values $P(z)$ satisfy the usual axioms of probability theory~\cite{Ramsey31}.  In particular,  they are non-negative and sum up to~$1$. 

Suppose now that the agent enters the following gambling game. She is given a randomly chosen initial bonus $M \in \mathbb{N}$,\footnote{We assume that $P_{M}(m) > 0$ for any $m \in \mathbb{N}$, which ensures that conditioning on the event $M = m$  is well defined.} and then plays in rounds, numbered by~$n$. In each of them, the same process is invoked to generate a data point~$Z_n$ with alphabet~$\cZ$. At the start of each round, the agent must pay an entry fee of~$1$, but then earns the amount~$r$ if (and only if) a particular outcome, $Z_n=\zspec$, occurs, where $\zspec \in \cZ$ and $r>1$ are constants. The rules also say that the agent must play at least $M$ rounds (which is always possible with the initial bonus),  but has to stop when she runs out of money. Denoting by~$\mathbf{Z}$ the tuple consisting of the outcomes $Z_n$ obtained during the entire game and by $|\mathbf{Z}|$ its length, corresponding to the number of rounds played, we have
\begin{align} \label{eq_totalnumberbound}
  M \leq |\mathbf{Z}| \leq M + r \sum_{n=1}^{|\mathbf{Z}|-1} \delta(Z_n, \zspec) \ ,
\end{align}
where $\delta(x,y) = 1$ if $x=y$ and $\delta(x,y)=0$ otherwise. 

The definitions below all refer to an agent's belief about the quantities $\mathbf{Z}$ and $M$ occurring in this game.  Technically, they are expressed as properties of the joint probability distribution $P_{\mathbf{Z} M}$.

\subsection*{Property~\Repeat{}} 

Property~\Repeat{}, which enters as an assumption in claim~\eqref{eq_claimBornBayes},  captures the idea that the agent's personal belief about the outcome $Z_n$ of the next round does not depend on whether it is planned to continue later.  In the gambling game, the value $M$ has no other effect than setting lower and upper bounds on  the number of repetitions, which we stated in~\eqref{eq_totalnumberbound}. As a necessary requirement, we may therefore demand that the gambler's belief about $Z_n$ before starting round~$n$ is independent of~$M$. 

\begin{emphbox}
\addtocounter{footnote}{1} \renewcommand{\thempfootnote}{\thefootnote} 
  An agent's belief satisfies~\Repeat{} if\footnotemark
  \begin{align*}
P_{Z_n | Z_1 \cdots Z_{n-1}, |\mathbf{Z}| \geq n}  = P_{Z_n | Z_1 \cdots Z_{n-1}, |\mathbf{Z}| \geq n, M = m}
\end{align*}
for any $n, m \in \mathbb{N}$. 
 \end{emphbox}
 
 \addtocounter{footnote}{-1}
 \footnotetext{The equality is meant to hold for all arguments $(z_1, \ldots, z_n)$ for which the conditional probability $P_{Z_n | Z_1 \cdots Z_{n-1}, |\mathbf{Z}| \geq n}$ is defined, which is the case whenever $P_{Z_1 \cdots Z_{n-1} | |\mathbf{Z}| \geq n}(z_1, \ldots, z_{n-1})$ is strictly positive.}
 
 Note that the condition $|\mathbf{Z}| \geq n$ ensures that the outcome $Z_n$ is defined.

\subsection*{Property~\Symmetry{}} 

This property appears as another assumption in claim~\eqref{eq_claimBornBayes}. It demands that, if the game is played for at least $m$ rounds, then the agent's belief about the first $m$ outcomes, $Z_1, \ldots, Z_m$,  does not depend on their ordering. 

\begin{emphbox}
  An agent's belief satisfies~\Symmetry{} if 
  \begin{align*}
    P_{Z_1 \cdots Z_{m} | M = m} = P_{Z_{\pi(1)} \cdots Z_{\pi(m)} | M =m}
  \end{align*}
  for any $m \in \mathbb{N}$ and any permutation $\pi$ on $\{1, \ldots, m\}$.
 \end{emphbox}

\subsection*{Property~\BornBayes{}} 

So far, we have not said anything about the process that generates the values $Z_n$  in the gambling game. But now, to phrase property~\BornBayes{} | the Bayesian interpretation of the Born rule | we obviously need to bring in quantum mechanics. We therefore assume that the data-generating process is of the prepare-and-measure type as described earlier. For \BornBayes{} it suffices to consider one single round of the game. However, Theorem~\ref{thm_BornBayes} below uses \Repeat{} and \Symmetry{} as assumptions, which involve multiple rounds, and one should think of each round as consisting of the same state preparation and measurement. 

\begin{emphbox}
  An agent's belief satisfies~\BornBayes{} if 
  \begin{align*}
      P_{Z_1}(\zspec) =  \| \pi_\zspec \psi \|^2   \ ,
  \end{align*}
  for any $\zspec \!  \in \! \cZ$, whenever the prepared state is $\psi \in  \cK$, and the measurement is carried out with respect to $\{\pi_z\}_{z \in \cZ}$.
 \end{emphbox}

\subsection*{Claim~\eqref{eq_claimBornBayes}}

Theorem~\ref{thm_BornBayes} below is a verbose formulation of claim~\eqref{eq_claimBornBayes}. In addition to \Repeat{} and \Symmetry{},  which were explained above, it is assumed that the agent's personal belief is \emph{compatible with} a given theory~$T$.  By this we mean that the agent assigns probability~$0$ to all events that only occur according to stories $s \in \Sigma$ that are forbidden by~$T$.

\begin{emphbox}
  \vspace{-1.5ex}
\begin{theorem}  \label{thm_BornBayes}
 If an agent's belief satisfies \Repeat{} and \Symmetry, and is compatible with a theory for which~ \BornFreq{} holds, then it also satisfies \BornBayes{}. 
\end{theorem}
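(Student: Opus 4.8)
The plan is to run a de~Finetti style argument: from \Repeat{} and \Symmetry{} I would first manufacture an honest infinitely exchangeable belief $Q$ about the whole outcome sequence, then identify $P_{Z_1}(\zspec)$ with the expected limiting relative frequency of $\zspec$, and finally use compatibility with \BornFreq{} to evaluate that frequency as $\|\pi_\zspec\psi\|^2$.

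First I would build the exchangeable measure. Fix $m$ and expand $P_{Z_1\cdots Z_m|M=k}$ by the chain rule for any horizon $k\geq m$. Because $M=k$ forces $|\mathbf{Z}|\geq k\geq n$ for every factor $n\leq m$, each conditional $P_{Z_n|Z_1\cdots Z_{n-1},M=k}$ coincides with $P_{Z_n|Z_1\cdots Z_{n-1},|\mathbf{Z}|\geq n,M=k}$, which by \Repeat{} does not depend on $k$. Hence $Q_{Z_1\cdots Z_m}:=P_{Z_1\cdots Z_m|M=m}$ is independent of the horizon, the family $\{Q_{Z_1\cdots Z_m}\}_m$ is consistent, and Kolmogorov extension yields a single measure $Q$ on infinite outcome sequences. \Symmetry{} makes each finite marginal permutation invariant, so $Q$ is infinitely exchangeable, and \Repeat{} at $n=1$ (where $|\mathbf{Z}|\geq 1$ holds automatically since $M\geq 1$) gives $Q_{Z_1}=P_{Z_1}$.

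Next comes the de~Finetti step. Under $Q$ the empirical frequencies $f^{(m)}_\zspec=\frac1m\sum_{i=1}^m\delta(Z_i,\zspec)$ converge $Q$-almost surely and in $L^1$ to a limit $f_\zspec$ (the standard reverse-martingale fact for exchangeable sequences). Exchangeability gives $Q(Z_i=\zspec)=Q(Z_1=\zspec)$ for all $i$, so $E_Q[f^{(m)}_\zspec]=Q(Z_1=\zspec)=P_{Z_1}(\zspec)$ for every $m$; letting $m\to\infty$ yields $P_{Z_1}(\zspec)=E_Q[f_\zspec]$. It therefore remains to show $f_\zspec=\|\pi_\zspec\psi\|^2$ $Q$-almost surely.

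The hard part is this last identification, because compatibility with \BornFreq{} is a statement about the actual game with its random stopping time, whereas $f_\zspec$ lives on the idealized never-stopping process $Q$. I would bridge the two by exploiting our freedom to choose the payoff $r$: given a threshold $\theta''>\|\pi_\zspec\psi\|^2$, pick $r>1/\theta''$. On the event $\{f^{(n)}_\zspec\geq\theta''\ \forall n\geq N_0\}$ the gambler's capital after $n$ rounds, $m-n+r\sum_{i=1}^n\delta(Z_i,\zspec)\geq m+n(r\theta''-1)$, stays above $1$ for all $n\geq m$ once $m>N_0$; since the first $m$ rounds are forced, the game never terminates and issues infinitely many yellow cards at threshold $\theta''$. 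Such a story is forbidden by \BornFreq{}, so compatibility together with $P_M(m)>0$ forces its conditional probability to vanish. Since on this event the game follows exactly the conditionals defining $Q$ and never stops, $P(\cdot\mid M=m)$ and $Q$ assign it the same probability, giving $Q(f^{(n)}_\zspec\geq\theta''\ \forall n\geq N_0)=0$ for every $N_0$, hence $Q(f_\zspec>\|\pi_\zspec\psi\|^2)=0$. Applying the same argument to the complementary binary measurement $\{\pi_\zspec,\id-\pi_\zspec\}$, whose ``not-$\zspec$'' frequency is likewise bounded by $1-\|\pi_\zspec\psi\|^2$, yields the matching lower bound $Q(f_\zspec<\|\pi_\zspec\psi\|^2)=0$. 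Thus $f_\zspec=\|\pi_\zspec\psi\|^2$ almost surely and $P_{Z_1}(\zspec)=E_Q[f_\zspec]=\|\pi_\zspec\psi\|^2$, which is \BornBayes{}. The main obstacle to watch is precisely this coupling: making sure that high-frequency trajectories of $Q$ genuinely force the finite-stakes game to run forever, so that the forbidden ``infinitely many yellow cards'' behavior is actually triggered and compatibility can be invoked.
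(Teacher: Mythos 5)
Your proposal is correct, and it shares the paper's de~Finetti-style skeleton: the measure $Q$ you build from \Repeat{} via the chain rule and Kolmogorov extension is exactly the paper's $P^{*}$ (the paper defines it through the conditionals $P_{Z_n|Z_1\cdots Z_{n-1},M\geq n}$ and proves the analogue of your horizon-independence as its Eqs.~\eqref{eq_Pstarm}--\eqref{eq_PstarPeq}), and \Symmetry{} makes it exchangeable in both treatments. Where you genuinely diverge is in how the game is coupled to \BornFreq{} and how equality is reached. The paper works in the opposite payoff regime: it fixes $r<\theta^{-1}$ with $\theta\in(\|\pi_\zspec\psi\|^2,r^{-1})$, so that playing forever would force persistently high frequencies; compatibility then makes bankruptcy certain, which translates under $P^{*}$ into $\liminf_n \frac{1}{n}\sum_i \delta(\bar Z_i,\zspec)\leq r^{-1}$ almost surely, and the one-sided Lemma~\ref{lem_singlepfreq} together with the normalisation trick ($\sum_\zspec P_{Z_1}(\zspec)=1=\sum_\zspec\|\pi_\zspec\psi\|^2$) upgrades the resulting inequality~\eqref{eq_BornBayesIneq} to equality. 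You instead take $r>1/\theta''$, so that on persistently-high-frequency trajectories the gambler stays solvent and the forbidden ``infinitely many yellow cards'' behaviour is realised directly; this kills $\{f^{(n)}_\zspec\geq\theta''\ \forall n\geq N_0\}$ under $Q$, and the reverse-martingale identity $P_{Z_1}(\zspec)=E_Q[f_\zspec]$ (a stronger convergence fact than Lemma~\ref{lem_singlepfreq}, but standard for exchangeable sequences) closes the upper bound without the detour through certain bankruptcy. The price is that you need a matching lower bound, for which you invoke the coarse-grained measurement $\{\pi_\zspec,\id-\pi_\zspec\}$; that step is legitimate | \BornFreq{} quantifies over all projector families, and your implicit assumption that the agent's conditionals do not depend on the game's payoff structure is of the same kind the paper makes when it varies $r$ and the designated outcome $\zspec$ | but it is avoidable: since your argument already yields $P_{Z_1}(\zspec)\leq\|\pi_\zspec\psi\|^2$ for \emph{every} $\zspec$, the paper's summation argument gives equality immediately, which would shorten your proof and remove the only place where an additional (coarse-grained) game and push-forward of \Repeat{} and \Symmetry{} must be justified.
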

\end{emphbox}

\begin{proof}
We will show that, under the assumptions of the theorem, for any $\zspec \in \cZ$,
  \begin{align} \label{eq_BornBayesIneq}
  P_{Z_1}(\zspec) \leq \| \pi_\zspec \psi \|^2
\end{align}
holds.  \BornBayes{}, which has an equality instead of an inequality, then follows because, when taking the sum over $\zspec \in \cZ$, both sides of~\eqref{eq_BornBayesIneq} add up to~$1$.

Suppose that the agent plays the gambling game described above, with the constant~$r$ set to any value within the open interval $(1, \| \pi_\zspec \psi\|^{-2})$, where we assume without loss of generality that $\| \pi_\zspec \psi\|^2 < 1$ (otherwise~\eqref{eq_BornBayesIneq} is trivial). We first argue that, if the agent decides to continue playing for as long as possible, she will  run out of money with certainty.
    
Let $\theta$ be any fixed real from the open interval  $(\| \pi_\zspec \psi\|^2, r^{-1})$. Furthermore, for any $n, m \in \mathbb{N}$, let
\begin{align*}
  \Theta^n_{m} = \bigl\{(z_1, \ldots, z_n): \, \inf_{k \leq n}  m - k + r \sum_{i=1}^{k} \delta(z_i, \zspec) \geq 1 \bigr\}
\end{align*}
be the set of all tuples $\mathbf{z} = (z_1, \ldots, z_n)$ of  outcomes of the first $n$ rounds for which an initial bonus of $M=m$ coins suffices to play for at least one more round~$n+1$.  For $n$ sufficiently larger than $m$, namely
  \begin{align*}
  n \geq \frac{m}{r^{-1}-\theta} \ ,
  \end{align*}
 any $\mathbf{z} \in \Theta^n_{m}$ satisfies
\[
   \sum_{i=1}^n \delta(z_i, \zspec)
   \geq \frac{n-m}{r}
   \geq  \theta n  \ ,
\]
which is the property measured by the projector $\Pi_n$ defined by \eqref{eq_projectorform}. However, since $\theta > \| \pi_\zspec \psi\|^2$, stories according to which this property holds for $n \in \mathbb{N}$ arbitrarily large are forbidden by any theory that fulfils \BornFreq{}. We have thus established that the agent, if her strategy was to continue playing as long as she has money left, must  assign probability~$1$ to the event that she will run out of money after finitely many rounds. Note also that the probability distributions occurring in the assumptions \Repeat{} and \Symmetry{} do not depend on the choice of strategy. We can therefore assume in the following that the agent decides to play until she runs out of money (although this is obviously not a profitable strategy). We thus get a finite  tuple~$\mathbf{Z}$ of outcomes such that
\begin{align} \label{eq_haltcondition}
  (Z_1, \ldots, Z_{n-1}) \in \Theta^{n-1}_{M} \iff | \mathbf{Z} | \geq n 
\end{align}
holds for any $n \in \mathbb{N}$.

We now define a distribution $P^{*}_{\bar{\mathbf{Z}}}$ over (infinite) sequences $\bar{\mathbf{Z}} = (\bar{Z}_n)_{n \in \mathbb{N}}$  by 
\begin{align*}
  P^{*}_{\bar{Z}_n | \bar{Z}_1 \cdots \bar{Z}_{n-1}} 
  & = P_{Z_n | Z_1 \cdots Z_{n-1}, M \geq n} \\
  & =  P_{Z_n | Z_1 \cdots Z_{n-1}, |\mathbf{Z}| \geq n, M \geq n}  \ ,
\end{align*}
where the second equality holds because $M \geq n \implies |\mathbf{Z}| \geq n$ (see~\eqref{eq_totalnumberbound}), ensuring that the probabilities for $Z_n$ are defined. In the following we write $\mathbf{z}_1^n$ for the first $n$ entries of a tuple $\mathbf{z}$. Because of \Repeat{} and~\eqref{eq_haltcondition}, we have
\begin{align} \label{eq_Pstarm}
P^{*}_{\bar{Z}_n | \mathbf{\bar{Z}}_1^{n-1} = \mathbf{z}_1^{n-1}} 
& = P_{Z_n | \mathbf{Z}_1^{n-1} = \mathbf{z}_1^{n-1}, M = m} 
\end{align}
for any $n, m \in \mathbb{N}$ and any tuple $\mathbf{z}$ satisfying $\mathbf{z}_1^{n-1} \in \Theta^{n-1}_{m}$. By induction over~$n$, and using that 
\begin{align*}
  \mathbf{z}_1^{n} \in \Theta^n_{m} \implies \mathbf{z}_1^{n-1} \in \Theta^{n-1}_{m} \ ,
\end{align*}
we can turn~\eqref{eq_Pstarm} into
\begin{align} \label{eq_starequiv}
  P^{*}_{\bar{Z}_1 \cdots \bar{Z}_n}(\mathbf{z}_1^{n}) = P_{Z_1 \cdots Z_n | M = m}(\mathbf{z}_1^{n})
\end{align} 
whenever $\mathbf{z}_1^{n-1} \in \Theta^{n-1}_{m}$. This also implies that
\begin{align}  \label{eq_PstarPeq}
  P^{*}_{\bar{Z}_1 \cdots \bar{Z}_n} = P_{Z_1 \cdots Z_n | M = m}
\end{align} 
whenever $n \leq m$, because under this condition the set $\Theta^{n-1}_{m}$ contains all possible $(n-1)$-tuples.  
  
Condition~\eqref{eq_haltcondition}, together with the fact (established above) that $\mathbf{Z}$ has a finite length $|\mathbf{Z}| \geq m$ with certainty, implies that, conditioned on any $M=m$, the event
\begin{align*}
  \exists n \geq m: \, \mathbf{Z}_1^{n-1} \in \Theta^{n-1}_{m} \quad \text{and} \quad   \mathbf{Z}_1^{n} \notin \Theta^{n}_{m}
\end{align*}
occurs with certainty.  Inserting the definition of $\Theta^n_{m}$, this is equivalent to the claim that the event
\begin{align*}
  \exists n \geq m: \, \mathbf{Z}_1^{n-1} \in \Theta^{n-1}_{m} \quad \text{and} \quad m - n + r \sum_{i=1}^{n} \delta(Z_i, \zspec) < 1 
\end{align*}
occurs with certainty conditioned on $M = m$. Note that this statement only involves probabilities of $n$-tuples $\mathbf{Z} = (Z_1, \ldots, Z_n)$ such that $\mathbf{Z}_1^{n-1} \in \Theta^{n-1}_{m}$. Hence, by virtue of~\eqref{eq_starequiv}, we can conclude that the above event also occurs with certainty when $\mathbf{Z}$ is replaced by $\mathbf{\bar{Z}}$ sampled according to the probability distribution~$P^*_{\mathbf{\bar{Z}}}$. In particular, for any $m \in \mathbb{N}$, the event
\begin{align*}
  \exists n  \geq m: \, \mathbf{\bar{Z}}_1^{n-1} \in \Theta^{n-1}_{m} \quad \text{and} \quad \frac{1}{n} \sum_{i=1}^{n} \delta(\bar{Z}_i, \zspec) < r^{-1}
\end{align*}
occurs with certainty. We have thus established that
\begin{align} \label{eq_Pstarfreq}
  \liminf_{n \to \infty} \frac{1}{n} \sum_{i=1}^{n} \delta(\bar{Z}_i, \zspec) \leq r^{-1}
\end{align}
holds with certainty for $\bar{\mathbf{Z}} = (\bar{Z}_n)_{n \in \mathbb{N}}$ sampled according to $P^*_{\bar{\mathbf{Z}}}$.  

To conclude the argument, we recall that $r$ was an arbitrary real from the open interval $(1, \| \pi_\zspec \psi\|^{-2})$, i.e., 
\begin{align} \label{eq_Pstarfreqmu}
  \liminf_{n \to \infty} \frac{1}{n} \sum_{i=1}^{n} \delta(\bar{Z}_i, \zspec) \geq \| \pi_\zspec \psi\|^2 + \varepsilon
\end{align}
has probability~$0$ for any $\varepsilon > 0$.  Because of the assumption \Symmetry{} and~\eqref{eq_PstarPeq}, the sequence $\mathbf{\bar{Z}}$ is exchangeable. The representation theorem of de Finetti~\cite{deFin37} implies that such sequences have the property that their joint distribution can be written as a convex combination of distributions of the form $P_{\bar{Z}}^{\times n}$. Furthermore, this convex combination must contain distributions $P_{\bar{Z}}$ such that $P_{\bar{Z}}(\zspec) \geq P^*_{\bar{Z}_1}(\zspec)$. This implies that, with probability strictly larger than zero, the frequency of $\zspec$ in the sequence $\mathbf{\bar{Z}}$ is equal or larger than $P^*_{\bar{Z}_1}(\zspec)$. This statement is made precise by Lemma~\ref{lem_singlepfreq} in the appendix. Using now that~\eqref{eq_Pstarfreqmu} has probability~$0$, the lemma implies
\begin{align*}
P^*_{\bar{Z}_1}(\zspec) <  \| \pi_\zspec \psi\|^2  + \varepsilon \ .
\end{align*}
Using that  $\varepsilon > 0$ is arbitrary and, again, \eqref{eq_PstarPeq}, we get
\begin{align*}
  P_{Z_1 | M = m}(\zspec) \leq \| \pi_\zspec \psi\|^2
\end{align*}
for any $m \geq 1$, which immediately implies~\eqref{eq_BornBayesIneq}. 
\end{proof}

Combining Theorem~\ref{thm_BornBayes} with Theorem~\ref{thm_BornFreq}, we  obtain the following corollary. 

\begin{emphbox}
\vspace{-1.5ex}
\begin{corollary}  \label{cor_BornBayes}
 If an agent's belief satisfies \Repeat{} and \Symmetry{}, and is compatible with a theory for which \BornDet{} and \Robust{} hold,  then it also satisfies \BornBayes{}. 
\end{corollary}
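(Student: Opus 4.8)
The plan is to prove the corollary by directly composing the two theorems already established, since the statement is nothing more than their transitive chaining. I would begin by fixing an agent whose belief satisfies \Repeat{} and \Symmetry{} and which is compatible with a theory $T$ that fulfils \BornDet{} and \Robust{}. The entire argument then consists of applying Theorem~\ref{thm_BornFreq} and Theorem~\ref{thm_BornBayes} in sequence to this fixed $T$ and this fixed belief.

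The first step is to invoke Theorem~\ref{thm_BornFreq}: since $T$ satisfies both \BornDet{} and \Robust{}, it also satisfies \BornFreq{}. The point to emphasise here is that Theorem~\ref{thm_BornFreq} is a statement about the \emph{same} theory $T$ — it deduces a further postulate for $T$ rather than producing some new theory — so the agent's assumed compatibility with $T$ is, without any modification, compatibility with a theory for which \BornFreq{} holds.

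The second step is to apply Theorem~\ref{thm_BornBayes} to exactly this situation. Its hypotheses are that the belief satisfies \Repeat{} and \Symmetry{}, which is given, and that the belief is compatible with a theory for which \BornFreq{} holds, which was established in the first step. Its conclusion is precisely \BornBayes{}, which is what the corollary asserts, completing the proof.

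I do not expect any genuine obstacle, and I would flag this honestly: the corollary is a pure composition and contains no new mathematical content. The only thing worth verifying is that \emph{compatibility} is a relation between the belief and one fixed theory, and that it is unaffected when additional postulates are derived for that same theory. Because \BornFreq{} is obtained for the very theory $T$ with which the agent was assumed compatible, no strengthening or re-derivation of the compatibility notion is required, and the two theorems chain together verbatim.
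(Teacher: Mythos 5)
Your proposal is correct and matches the paper's own (implicit) proof exactly: the paper introduces the corollary with the remark that it follows by ``combining Theorem~\ref{thm_BornBayes} with Theorem~\ref{thm_BornFreq}'', i.e.\ precisely the transitive chaining you describe. Your observation that compatibility is a relation between the belief and the one fixed theory $T$, and hence carries over unchanged once \BornFreq{} is derived for that same $T$, is the only point needing care, and you handle it correctly.
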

\end{emphbox}

%

\section{Conclusions}

Theorem~\ref{thm_BornFreq} and~\ref{thm_BornBayes} show that the experimentally relevant consequences of the Born rule can as well be obtained from two alternative physical postulates, \BornDet{} and \Robust{} | together with certain natural assumptions about rational reasoning. These postulates have interesting features, which make them suitable as potential substitutes for the Born rule. Firstly, they are more general, for they are not restricted to experiments where a particular measurement is repeated arbitrarily often under identical conditions.  Secondly, they come closer to the idea that physical postulates should be generic principles rather than specific quantitative statements. 
 
The argument presented here may also shed new light on the nature of the Born rule itself. Is it an (objective) physical law, as suggested by \BornFreq{}? Or is it rather a statement about (subjective) beliefs, as in \BornBayes{}? While the former view is implicit to many standard quantum mechanics textbooks, the latter  is probably most consequently advocated by QBism~\cite{FuchsSchack13}, which regards the Born rule as an empirical addition to  Bayesian probability theory.  Corollary~\ref{cor_BornBayes} suggests that the Born rule could in reality just be a ``blend'', consisting both of objective and subjective ingredients.
Indeed, \BornDet{} and~\Robust{} are of the same objective kind as the usual physical laws, whereas \Repeat{} and~\Symmetry{} are manifestly subjective.  The Born rule may hence be viewed as the result of taking (objective) physical postulates and supplementing them with rules for (subjective) rational  reasoning.

\acknowledgments

We thank L\'idia del Rio, Artur Ekert, and R\"udiger Schack for discussions. This research was supported by the Swiss National Science Foundation (SNSF) via the NCCR ``QSIT'' and by the Air Force Office of Scientific Research (AFOSR) via grant FA9550-16-1-0245. 

\appendix

\section*{Appendix} \label{app_tech}

The following bound is used in the proof of Theorem~\ref{thm_BornFreq}. Results of this type are well established in information theory~\cite{Hoeffding63}. They are sometimes referred to as typicality bounds. For completeness, we nevertheless provide a statement with a proof.

\begin{lemma} \label{lem_distanceconv}
  Let $\pi_0, \pi_1$ be projectors on a Hilbert space $\cK$ such that $\pi_0 + \pi_1 = \id_{\cK}$, let $\psi \in \cK$ be normalised, let $k, n \in \mathbb{N}$, and define
  \[
    \Pi = \sum_{\substack{(b_1, \ldots, b_n) \in \{0,1\}^n \\ \sum_i b_i = n-k}} \pi_{b_1} \otimes \cdots \otimes \pi_{b_n} \ .
  \]
  Then
  \begin{align*}
    \bra{\psi^{\otimes n}} \Pi \ket{\psi^{\otimes n}} \leq e^{-2 n (\bra{\psi} \pi_0 \ket{\psi} - \frac{k}{n})^2 } \ .
  \end{align*}
\end{lemma}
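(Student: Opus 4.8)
The plan is to reduce the operator inequality to a classical concentration bound for the binomial distribution. First I would exploit that the expectation value factorises across the tensor product: since $\ket{\psi^{\otimes n}}$ is a product state,
\[
  \bra{\psi^{\otimes n}} (\pi_{b_1} \otimes \cdots \otimes \pi_{b_n}) \ket{\psi^{\otimes n}} = \prod_{i=1}^n \bra{\psi} \pi_{b_i} \ket{\psi} \ .
\]
Writing $p = \bra{\psi} \pi_0 \ket{\psi}$ and using $\pi_0 + \pi_1 = \id_{\cK}$, so that $\bra{\psi}\pi_1\ket{\psi} = 1-p$, each term in the sum defining $\Pi$ has exactly $n-k$ indices with $b_i=1$ and $k$ indices with $b_i=0$, hence contributes $p^k (1-p)^{n-k}$; there are $\binom{n}{k}$ such terms. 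Therefore
\[
  \bra{\psi^{\otimes n}} \Pi \ket{\psi^{\otimes n}} = \binom{n}{k} p^k (1-p)^{n-k} \ .
\]

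Next I would recognise the right-hand side as the point mass $\Pr[X = k]$ of a binomial variable $X \sim \mathrm{Bin}(n,p)$, i.e.\ $X = \sum_{i=1}^n Y_i$ with the $Y_i$ independent $\mathrm{Bernoulli}(p)$ and $\mathbb{E}[X] = np$. The target bound $e^{-2n(p-k/n)^2}$ is then precisely the Hoeffding tail estimate~\cite{Hoeffding63}. Concretely, Hoeffding's inequality supplies both $\Pr[X \geq k] \leq e^{-2n(k/n - p)^2}$, valid when $k \geq np$, and $\Pr[X \leq k] \leq e^{-2n(p-k/n)^2}$, valid when $k \leq np$. Since $\Pr[X=k]$ is dominated by whichever one-sided tail corresponds to the position of $k$ relative to its mean, in either case $\Pr[X=k] \leq e^{-2n(p - k/n)^2}$, which is the claim.

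For a fully self-contained argument I would instead invoke the method of types: the entropy bound $\binom{n}{k} \leq e^{n H(k/n)}$, with $H$ the binary entropy in nats, rewrites $\binom{n}{k} p^k (1-p)^{n-k}$ as $e^{-n D(k/n \,\|\, p)}$, where $D(\cdot\,\|\,\cdot)$ denotes binary relative entropy, and Pinsker's inequality $D(a\|p) \geq 2(a-p)^2$ then closes the estimate. I expect no genuine obstacle here: the single step that does the work is the factorisation turning the quantum expectation into a binomial point probability, after which the bound is a textbook concentration estimate. The only point requiring care is that the inequality must hold for every $k$, both below and above the mean $np$, which is exactly why the two-sided form of the tail bound (or the symmetric Pinsker inequality) is needed.
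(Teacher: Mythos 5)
Your proposal is correct, and your opening step---factorising the product-state expectation to get $\bra{\psi^{\otimes n}} \Pi \ket{\psi^{\otimes n}} = \binom{n}{k} p^k (1-p)^{n-k}$ with $p = \bra{\psi}\pi_0\ket{\psi}$---is exactly the paper's first move. From there your second route (method of types plus Pinsker) is essentially the paper's own proof: the paper reaches the same exponent $e^{-n D(k/n\,\|\,p)}$ via Stirling's approximation of the binomial coefficient, which forces it to treat $k=0$ and $k=n$ as separate cases before applying Pinsker's inequality; your entropy bound $\binom{n}{k} \leq e^{nH(k/n)}$ is valid for all $k \in \{0,\ldots,n\}$ and so is a mild tidying of the same argument, dispensing with that case split. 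Your first route, via Hoeffding's tail inequality, is genuinely different from (and shorter than) what the paper does: dominating the point mass $\Pr[X=k]$ by the one-sided tail on whichever side of the mean $np$ the value $k$ falls, and using the symmetry $(p-k/n)^2 = (k/n-p)^2$, yields the claim by direct citation of a standard concentration result---fitting, since the paper itself remarks that bounds of this type are well established and cites Hoeffding. What the paper's self-contained Pinsker derivation buys is independence from an external black box; what your tail-domination argument buys is brevity and the automatic uniformity in $k$. Both are sound, and you correctly flagged the one point of care, namely that the bound must hold on both sides of the mean.
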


\begin{proof}
  Let $p_b = \bra{\psi} \pi_b \ket{\psi}$ for $b \in \{0,1\}$, and note that
  \begin{align*}
     \bra{\psi^{\otimes n}} \Pi \ket{\psi^{\otimes n}} 
   = \binom{n}{k} p_0^{k} p_1^{n-k} \ .
   \end{align*}
   For $k=0$, the right hand side equals $p_1^n = e^{n \ln p_1}$, which cannot be larger than $e^{-2 n (1-p_1)^2}$, so that the claimed bound holds. The same is true, analogously, for $k=n$. For $k \in \{1, \ldots, n-1\}$ we use Stirling's approximation to bound the binomial, leading to
   \begin{align*}
      \bra{\psi^{\otimes n}} \Pi \ket{\psi^{\otimes n}}    & < \frac{e}{2 \pi} \sqrt{\frac{n}{k (n-k)}} \frac{n^n}{k^{k} (n-k)^{n-k}}   p_0^{k} p_1^{n-k} \ .
  \end{align*}
   With the definition $q_0 = \frac{k}{n}$ and $q_1 = \frac{n-k}{n}$, this can be further bounded by
  \begin{align*}
     \bra{\psi^{\otimes n}} \Pi \ket{\psi^{\otimes n}} 
   & < e^{-n(q_0 \ln \frac{q_0}{p_0} + q_1 \ln \frac{q_1}{p_1})} \\
   & \leq e^{-2 n (p_0 - q_0)^2} \ ,
  \end{align*}
  where we have used Pinsker's inequality~\cite{Csiszar67,Kulback67} in the second line.
\end{proof}

The lemma below, which is used in the proof of Theorem~\ref{thm_BornBayes}, is a relatively straightforward consequence of the de Finetti representation theorem~\cite{deFin37}. 

\begin{lemma} \label{lem_singlepfreq}
  Let $(X_i)_{i \in \mathbb{N}}$ be an exchangeable sequence of random variables on $\cX$, i.e., 
  \begin{align*}
    P_{X_1 \cdots X_n} = P_{X_{\pi(1)} \cdots X_{\pi(n)}} 
  \end{align*}
  for any $n \in \mathbb{N}$ and any permutation $\pi$ on $\{1, \ldots, n\}$.  Then, for any $\xi \in \cX$, the event 
  \begin{align*}
    \lim_{n \to \infty} \frac{1}{n} \sum_{i = 1}^n \delta(X_i, \xi) \geq P_{X_1}(\xi)  
  \end{align*}
  has non-zero probability. 
\end{lemma}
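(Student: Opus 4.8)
The plan is to combine the de Finetti representation theorem (already invoked in the main text) with the strong law of large numbers and an elementary ``above-the-mean'' argument. First I would apply de Finetti's theorem to the exchangeable sequence $(X_i)_{i \in \mathbb{N}}$: it yields a probability measure $\mu$ on the set of distributions $Q$ over $\cX$ such that $P_{X_1 \cdots X_n} = \int Q^{\times n} \, d\mu(Q)$ for every $n \in \mathbb{N}$. Operationally, drawing the sequence $(X_i)$ is the same as first drawing a random ``directing'' distribution $Q \sim \mu$ and then sampling the $X_i$ independently from $Q$.

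Next I would invoke the strong law of large numbers conditionally on $Q$: for $\mu$-almost every $Q$, the empirical frequency $\frac{1}{n}\sum_{i=1}^n \delta(X_i, \xi)$ converges $Q^{\times \infty}$-almost surely to $Q(\xi)$. Integrating this statement over $\mu$ (by Fubini), the limit exists $P$-almost surely and equals $Q(\xi)$, the value of the directing measure at $\xi$. Hence, up to a $P$-null set, the event whose probability we must bound is exactly $\{Q(\xi) \geq P_{X_1}(\xi)\}$.

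Then I would identify the threshold as a mean. Marginalising the representation to a single coordinate gives $P_{X_1}(\xi) = \int Q(\xi) \, d\mu(Q) = \mathbb{E}_\mu[Q(\xi)]$. Writing $Y := Q(\xi)$, a $[0,1]$-valued random variable with mean $\bar Y = P_{X_1}(\xi)$, the target event becomes $\{Y \geq \bar Y\}$. To finish, I would argue that any integrable random variable attains at least its mean with positive probability: were $\Pr_\mu[Y \geq \bar Y] = 0$, then $\bar Y - Y > 0$ would hold $\mu$-almost surely, so the strictly positive integrable quantity $\bar Y - Y$ would have $\mathbb{E}_\mu[\bar Y - Y] > 0$, i.e.\ $\mathbb{E}_\mu[Y] < \bar Y$, contradicting $\mathbb{E}_\mu[Y] = \bar Y$. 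Therefore $\Pr_\mu[Y \geq \bar Y] > 0$, which is the claim.

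The routine parts are the law of large numbers and the final positivity step; the one point demanding care is the passage from the per-$Q$ almost-sure convergence to a $P$-almost-sure statement about the limit and its value $Q(\xi)$, which relies on Fubini and on de Finetti's theorem holding in the relevant generality (countable, or more generally standard Borel, $\cX$, as in the paper's setting). I expect this measure-theoretic bookkeeping, rather than any genuinely hard estimate, to be the main obstacle.
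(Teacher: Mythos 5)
Your proposal is correct and follows essentially the same route as the paper's own proof: de Finetti's representation theorem, the strong law of large numbers applied conditionally on the directing measure $Q$, and the observation that the random variable $Q(\xi)$, having mean $P_{X_1}(\xi)$, must attain at least its mean with positive probability. Your explicit Fubini and strict-positivity bookkeeping merely spells out more carefully the contradiction step that the paper states tersely.
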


\begin{proof}
  Because of the exchangeability property, de Finetti's theorem asserts that there exists a probability measure $d\mu$ on the distributions $Q_X$ over $\cX$ such that
  \begin{align} \label{eq_deFin}
    P_{X_1 \cdots X_n} = \int Q_X^{\times n} d\mu(Q_X) 
  \end{align}
  for any $n \in \mathbb{N}$. We may therefore interpret $Q_X(\xi)$ as a random variable distributed according to $d \mu$ such that the distribution of the $n$-tuple $(X_1, \ldots, X_n)$ conditioned on $Q_X$ satisfies 
  \begin{align*}
    P_{X_1 \cdots X_n | Q_X} = Q_X^{\times n} \ .
  \end{align*}
   Hence, by the strong law of large numbers, the relative frequency of the symbol~$\xi$ in that $n$-tuple equals $Q_X(\xi)$, i.e., 
    \begin{align*}
      \lim_{n \to \infty} \frac{1}{n} \sum_{i = 1}^n \delta(X_i, \xi)    = Q_X(\xi)
  \end{align*}
  with certainty. It thus remains to show that $Q_X(\xi) \geq P_{X_1}(\xi)$ with non-zero probability.  To see this, assume by contradiction that the claim is wrong, i.e., that $Q_X(\xi) < P_{X_1}(\xi)$ with certainty. But then, taking the average over~$Q_X$, we obtain
  \begin{align*}
    \int Q_X(\xi) d\mu(Q_X)  
    < P_{X_1}(\xi) \ ,
   \end{align*}
   which contradicts~\eqref{eq_deFin}.
  \end{proof}

\bibliography{Foundations}

\end{document}